\tikzset{
	itria/.style={
		draw,dashed, 
		isosceles triangle,shape border rotate=90,yshift=-7ex}
}
\newcommand*{\sys}[1]{\ensuremath{\mathsf{#1}}}
\renewcommand*{\l}{\mathsf{L}}
\newcommand*{\f}{\mathsf{f}}
\newcommand*{\bi}{\sys{BI}}
\newcommand*{\lbi}{\l\bi}
\newcommand*{\fbi}{{\f\bi}}
\newcommand*{\rn}[1]  {\ensuremath{\mathsf{#1}}}
\newcommand*{\cut}{\rn{cut}}
\newcommand*{\weak}{\rn{W}}
\newcommand*{\exch}{\rn{E}}
\newcommand*{\cont}{\rn{C}}
\newcommand*{\rrn}[2][]  {\rn{#2_{R#1}}}
\newcommand*{\lrn}[2][]  {\rn{#2_{L#1}}}
\newcommand{\B}{\mathbb{B}}
\newcommand{\Formulas}{\mathbb{F}}
\newcommand{\nests}{\mathbb{B}\scriptstyle{/\equiv}\textstyle}
\renewcommand{\i}{\top^*}
\newcommand\wand{\mathrel{-\mkern-6mu*}}
\newcommand\upshift{{\uparrow}}
\newcommand\downshift{{\downarrow}}
\newcommand{\nestify}{\eta}
\newcommand{\bunch}{\beta}
\newcommand{\foc}[1]{\textcolor{blue}{\langle #1 \rangle}}
\newcommand{\floor}[1]{\lfloor #1 \rfloor}
\newcommand\D{\mathcal{D}}
\renewcommand{\vec}[1]{\overrightarrow{#1}}
\renewcommand{\emptyset}{\varnothing}
\renewcommand{\phi}{\varphi}
\newcommand{\seq}{\Rightarrow}
\title{Focused Proof-search in the Logic of Bunched Implications\thanks{This work has been partially supported by the UK’s EPSRC through research grant EP/S013008/1.}}
\author{Alexander Gheorghiu(\Letter) \and
	Sonia Marin(\Letter)
}
\authorrunning{A. Gheorghiu \and S. Marin.}
\institute{University College London, London, United Kingdom \\
	\email{\{alexander.gheorghiu.19, s.marin\}@ucl.ac.uk}\\
}
\newcommand{\LBI}{
		\begin{figure}[t] 
			\fbox{
				\begin{minipage}{.95\linewidth}
\centering
$
\infer[\rn{Ax}]{A \seq A}{}
\quad
\infer[\lrn\bot]{\Delta(\bot) \seq \phi}{}
\quad
\infer[\rrn\i]{\emptyset_\times \seq \i}{}
\quad
\infer[\rrn\top]{\emptyset_+ \seq \top}{}
$
\\[1.5ex]
$
\infer[\lrn\wand]{\Delta(\Delta',\Delta'',\phi \wand \psi) \seq \chi}{\Delta' \seq \phi & \Delta(\Delta'',\psi) \seq \chi}
\quad
\infer[\rrn\wand]{\Delta \seq \phi \wand \psi}{\Delta,\phi \seq \psi}
$
\\[1.5ex]
$
\infer[\lrn\ast]{\Delta(\phi * \psi) \seq \chi}{\Delta(\phi,\psi) \seq \chi}
\quad
\infer[\rrn\ast]{\Delta,\Delta' \seq \phi*\psi}{\Delta \seq \phi & \Delta' \seq \psi}
\quad
\infer[\lrn\i]{\Delta(\top^*) \seq \chi}{\Delta(\emptyset_\times)\seq \chi}
$
\\[1.5ex]
$
\infer[\lrn\land]{\Delta(\phi \land \psi) \seq \chi}{\Delta(\phi;\psi) \seq \chi}
\quad
\infer[\rrn\land]{\Delta;\Delta' \seq \phi\land\psi}{\Delta \seq \phi & \Delta' \seq \psi}
\quad
\infer[\lrn\top]{\Delta(\top) \seq \chi}{\Delta(\emptyset_+)\seq \chi}
$
\\[1.5ex]
$
\infer[\lrn\lor]{\Delta(\phi \lor \psi) \seq \chi}{\Delta ( \phi  ) \seq \chi & \Delta ( \psi ) \seq \chi }
\quad
\infer[\rrn\lor_1]{\Delta \seq \phi \lor \psi}{\Delta \seq  \phi}
\quad
\infer[\rrn\lor_2]{\Delta \seq \phi \lor \psi}{\Delta \seq  \psi}
$
\\[1.5ex]
$
\infer[\lrn\to]{\Delta(\Delta';\Delta'';\phi \to \psi) \seq \chi}{\Delta' \seq \phi & \Delta(\Delta'';\psi) \seq \chi}		
\quad
\infer[\rrn\to]{\Delta \seq \phi \to \psi}{\Delta;\phi \seq \psi}			
\quad
\infer[\cont]{\Delta(\Delta') \Rightarrow \chi}{\Delta(\Delta';\Delta') \Rightarrow \chi}
$
\\[1.5ex]
$
\infer[\weak]{\Delta(\Delta';\Delta'') \seq \chi}{\Delta(\Delta') \seq \chi}
\quad
\infer[\exch_{(\Delta \equiv \Delta')}]{\Delta' \seq \chi}{\Delta \seq \chi}
\quad\infer[\cut]{\Delta(\Delta') \Rightarrow \chi}{\Delta' \Rightarrow \phi & \Delta(\phi) \Rightarrow \chi}
$
				\end{minipage}
			}
			\caption{Sequent Calculus $\lbi$}
			\label{fig:lbi}
		\end{figure}
	}
\newcommand{\etalbi}{
		\begin{figure}[t] 
	\fbox{
		\begin{minipage}{0.95\linewidth}
\centering
$
\infer[\rn{Ax}]{\{\Gamma,A\}_+ \seq A}{}
\quad
\infer[\lrn\bot]{\Gamma\{\bot\} \seq \chi}{}
\quad
\infer[\rrn\i]{\emptyset_\times \seq \i}{}
\quad
\infer[\rrn\top]{\Gamma \seq \top}{}
$
\\[1.5ex]
$
\infer[\lrn\wand]{\Gamma\{\Gamma',\Gamma'',\{\Gamma''', \phi \wand \psi \}_+ \}_\times \seq \chi}{\Gamma' \seq \phi & \Gamma\{\Gamma'',\psi \}_\times \seq \chi}
\quad
\infer[\rrn\wand]{\Gamma \seq \phi \wand \psi }{\{\Gamma,\phi\}_\times \seq \psi}
$
\\[1.5ex]
$
\infer[\lrn\ast]{\Gamma\{\phi * \psi \} \seq \chi}{\Gamma\{\{\phi,\psi\}_\times\} \seq \chi}
\quad
\infer[\rrn\ast]{\{\{\Gamma,\Gamma'\}_\times \,,\Gamma''  \}_+ \seq \phi*\psi }{\Gamma\seq \phi & \Gamma' \seq \psi}
\quad
\infer[\lrn\i]{\Gamma \{\i \} \seq \chi}{\Gamma\{\emptyset_\times\} \seq \chi}
$
\\[1.5ex]
$
\infer[\lrn\land]{\Gamma \{ \phi \land \psi \} \seq \chi}{\Gamma \{\{ \phi,\psi\}_+\} \seq \chi}
\quad
\infer[\rrn\land]{\Gamma \seq \phi\land\psi}{\Gamma \seq \phi & \Gamma \seq \psi}
\quad
\infer[\lrn\top]{\Gamma\{\top \} \seq \chi}{\Gamma \{ \emptyset_+ \} \seq \chi}
$
\\[1.5ex]
$
\infer[\lrn\lor]{\Gamma\{\phi \lor \psi \} \seq \chi}{\Gamma \{ \phi  \} \seq \chi & \Gamma \{ \psi \} \seq \chi}
\quad
\infer[\rrn\lor_1]{\Gamma \seq \phi \lor \psi}{\Gamma \seq \phi}
\quad
\infer[\rrn\lor_2]{\Gamma \seq \phi \lor \psi}{\Gamma \seq \psi}
$
\\[1.5ex]
$
\infer[\lrn\to]{\Gamma\{\Gamma',\phi \to \psi \}_+ \seq \chi}{\Gamma' \seq \phi & \Gamma\{\Gamma',\psi \}_+ \seq \chi}
\quad
\infer[\rrn\to]{\Gamma \seq \phi \to \psi }{\{\Gamma,\phi \}_+ \seq \psi}
\quad
\infer[\cont]{\Gamma\{\Gamma'\}_+\seq \chi}{\Gamma\{\Gamma',\Gamma'\}_+ \seq \chi}
$
		\end{minipage}
	}
	\caption{Sequent Calculus $\nestify\lbi$} \label{fig:taulbi}
\end{figure}
}
\newcommand{\focbi}{
\begin{figure}[t]
	\fbox{
		\begin{minipage}{.95\linewidth}
			\centering
			\textbf{Focused} 
			\\[1ex]
			$
			\infer[\rn{Ax^+}]{\{\Gamma, A_+\}_+ \seq  \foc{A_+} }{}
			\quad
			\infer[\rn{Ax^-}]{\{\Gamma, \foc{A_-}\}_+ \seq A_- }{}
			\quad
			\infer[\rrn{\top^+}]{\vec\Gamma \seq \foc{\top^+} }{}
			$
			\\ [1ex]
			$
			\infer[\rn P]{\{\vec\Gamma,\downshift \upshift P\}_+ \seq \foc{P}}{}
			\quad
			\infer[\rn N]{\{\vec\Gamma,\foc{N}\}_+  \seq \upshift \downshift  N}{}
			\quad
			\infer[\rrn{\i}]{\{\vec\Gamma,\emptyset_\times\}_+ \seq  \foc{\i} }{}
			$
			\\ [1ex]
			$
			\infer[\text{$\lrn[i]{\land^- }$}]{\vec{\Gamma}\{ \foc{N_1 \land^- N_2} \}_+ \seq R}{\vec{\Gamma}\{\foc{N_i}\}_+ \seq R}
			\quad
			\infer[\text{$\rrn[i]{\lor }$}]{\vec{\Gamma} \seq  \foc{P_1 \lor P_2} }{\vec{\Gamma}\seq \foc{P_i} }
			\quad
			\infer[\lrn{\top^-}]{\vec\Gamma\{\foc{\top^-}\} \seq R}{\vec\Gamma\{\emptyset_+\} \seq R}
			$
			\\ [1ex]
			$
			\infer[\rrn{\land^+}]{\{ \vec{\Gamma},\vec{\Gamma}' \}_+ \seq \foc{P \land^+ Q}}{\vec{\Gamma} \seq \foc{P} & \vec{\Gamma}' \seq \foc{Q}}
			\quad
			\infer[\lrn\to]{\vec{\Gamma}\{\vec{\Delta},\foc{P \to N} \}_+ \seq R}{
				\vec{\Delta} \seq \foc{P} 
				&
				\vec{\Gamma}\{\vec{\Delta},\foc{N}\}_+ \seq R
				}
			$
			\\ [1ex]
			$
			\infer[\rrn\ast]{\{\{\vec{\Gamma},\vec{\Gamma}'\}_\times, \vec{\Gamma}''\}_+ \seq \foc{P * Q}}{
				\vec{\Gamma}\seq \foc{P} 
				&
				\vec{\Gamma}' \seq  \foc{Q}
			}
			\quad
			\infer[\lrn\wand]{\vec{\Gamma}\{\vec{\Delta},\vec{\Delta}', \{\vec{\Delta}'',\foc{P \wand N} \}_+ \}_\times \seq R}{
				\vec{\Delta} \seq \foc{P}
				&
				\vec{\Gamma}\{\vec{\Delta}',\foc{N}\}_\times \seq  R
				}
			$
			\\ [1.5ex]
			\textbf{Neutral}
			\\ [1.5ex]
			$
			\infer[\rrn\upshift]{\vec{\Gamma} \seq \upshift P}{\vec{\Gamma} \seq \foc{P}}
			\quad
			\infer[\lrn\upshift]{\vec{\Gamma}\{\foc{\upshift P}\} \seq R}{\vec\Gamma\{P\} \seq R}
			\quad
			\infer[\rrn\downshift]{\vec{\Gamma} \seq \foc{\downshift N}}{\vec{\Gamma} \seq N}
			\quad
			\infer[\lrn\downshift]{\vec{\Gamma}\{\downshift N\} \seq R}{\vec{\Gamma}\{ \foc{N}\} \seq R}
			$
			\\ [1ex]
			$
			\infer[\rn C]{\vec{\Gamma}\{\vec{\Delta}\} \seq R}{\vec{\Gamma}\{\{\vec{\Delta},\vec{\Delta}\}_+\} \seq R}
			$
			\\ [1.5ex]
%
%
			\textbf{Unfocused}
			\\ [1ex]
			$
			\infer[\rrn{\top^-}]{\Gamma \seq  \top^- }{}
			\quad
			\infer[\lrn{\bot}]{\Gamma\{\bot\} \seq N}{}
			$
			\\ [1ex]
			$
			\infer[\rrn{\land^-}]{\Gamma \seq N \land^- M}{\Gamma \seq N & \Gamma \seq M}
			\quad
			\infer[\lrn\lor]{\Gamma\{P\lor Q\} \seq N}{\Gamma\{P\} \seq N & \Gamma\{Q\} \seq N}
			$
			\\ [1ex]
			$
			\infer[\lrn{\land^+}]{\Gamma\{P\land^+ Q\} \seq  N}{\Gamma\{\{P, Q\}_+\} \seq  N}
			\quad
			\infer[\rrn\to]{\Gamma \seq P \to N}{\{\Gamma, P \}_+ \seq N}
			\quad
			\infer[\lrn{\top^+}]{\Gamma\{\top^+ \} \seq N}{\Gamma \{ \emptyset_+ \} \seq N}
			$
			\\ [1ex]
			$
			\infer[\lrn\ast]{\Gamma\{P* Q\} \seq N}{\Gamma\{\{P,Q\}_\times\} \seq N}
			\quad
			\infer[\rrn\wand]{\Gamma \seq P \wand N}{\{\Gamma, P \}_\times \seq  N}
			\quad
			\infer[\lrn\i]{\Gamma \{\i \} \seq N}{\Gamma\{\emptyset_\times\} \seq N}
			$
		\end{minipage}
	}
	\caption{System $\fbi$} \label{fig:fbi}
\end{figure}	
}
\begin{document}

	\maketitle              

\begin{abstract}
			The logic of Bunched Implications (BI) freely combines additive and multiplicative connectives, including implications; however, despite its well-studied proof theory, proof-search in BI has always been a difficult problem. The focusing principle is a restriction of the proof-search space that can capture various goal-directed proof-search procedures. In this paper we show that focused proof-search is complete for BI by first reformulating the traditional bunched sequent calculus using the simpler data-structure of nested sequents, following with a polarised and focused variant that we show is sound and complete via a cut-elimination argument. This establishes an operational semantics for focused proof-search in the logic of Bunched Implications.
	
	\keywords{Logic \and Proof-search \and Focusing \and Bunched Implications.}
\end{abstract}

\section{Introduction}
The \emph{Logic of Bunched Implications} (BI)~\cite{Hearn99} is well-known for its applications in systems modelling \cite{Pym19}, especially a particular theory (of a variant of BI) called \emph{Separation Logic}~\cite{Reynolds02,Ishtiaq2011} which has found industrial use in program verification. 
In this work, we study an aspect of proof search in BI, relying on its well-developed and well-studied proof theory~\cite{Pym02}.
We show that a goal-directed proof-search procedure known as \emph{focused proof-search} is complete; that is, if there is a proof then there is a focused one. Focused proofs are both interesting in the abstract, giving insight into the proof theory of the logic, and have (for other logics) been a useful modelling technology in applied settings. For example, focused proof-search forms an operational semantics of the DPLL SAT-solvers~\cite{Farooque13}, logic programming~\cite{Miller91,Andreoli92,Dyckhoff06,Chaudhuri06}, automated theorem provers \cite{Mclaughlin08}, and has been successful in providing a meta-theoretic framework in intuitionistic, substructural, and modal logics~\cite{Marin16,Miller13,Liang09}. 

Syntactically BI combines additive and multiplicative connectives, but unlike related logics such as Linear Logic (LL) \cite{Girard87}, BI takes all the connectives as primitive. Indeed, it arose from a proof-theoretic investigation on the relationship between conjunction and implication. As a result, sequents in BI have a more complicated structure: each implication comes with an associated context-former. Therefore, in BI contexts are not lists, nor multisets, but instead are \emph{bunches}: binary trees whose leaves are formulas and internal nodes context-formers. Additive composition $(\Gamma;\Delta)$ admits the structural rules of weakening and contraction, whereas multiplicative composition $(\Gamma, \Delta)$ denies them. The principal technical challenges when studying proof-search in BI arise from the interaction between the additive and multiplicative fragments. We overcome these challenges by restricting the application of structural rules in the sequent calculus $\lbi$ as well as working with a representation of bunches as nested multisets. 

Throughout we use the term \emph{sequent calculus} in a strict sense; that is, meaning a label-free internal sequent calculus, formed in the case of BI by a context (a bunch) and a consequent (a formula). The term \emph{proof-search} is consistently understood to be read as backward reduction within such a system. Although there is an extensive body of research on systems and procedures for semantics-based calculi in BI \cite{Galmiche2001,Galmiche2002,Galmiche2003,Galmiche05,Galmiche2019}, there has been comparatively little formal study on proof-search in the strict sense. One exception is the completeness result for (unit-simple) uniform proofs \cite{Armelin02} which is partially subsumed by the results herein.

The \emph{focusing principle} was introduced for Linear Logic \cite{Andreoli92} and is characterised by alternating \emph{focused} and \emph{unfocused} phases of goal-directed proof-search. The unfocused phase comprises rules which are safe to apply (i.e. rules where provability is invariant); conversely, the focused phase contains the reduction of a formula and its sub-formulas where potentially invalid sequents may arise, and backtracking may be required. 
During focused proof-search the unfocused phases are performed eagerly, followed by controlled goal-directed focused phases, until safe reductions are available again. We say that the focusing principle holds when every provable sequent has a focused proof. This alternation can be enforced by a mechanism based on a partition of the set of formulas into two classes, \emph{positive} and \emph{negative}, which correspond to safe behaviour on the left and right respectively; that is, for negative formulas provability is invariant with respect to the application of a right rule, and for positive formulas, of a left rule, but in the other cases the application may result in invalid sequents.

The original proof of the focusing principle in Linear Logic was via long and tedious permutations of rules \cite{Andreoli92}. In this paper, we use for BI a different methodology, originally presented in \cite{Laurent04}, which has since been implemented in a variety of logics \cite{Liang09,Chaudhuri16b,Chaudhuri16a} and proof systems \cite{Dyckhoff06}. The method is as follows: given a sequent calculus, first one polarises the syntax according to the positive/negative behaviours; second, one gives a focused variation of the sequent calculus where the control flow of proof-search is managed by polarisation; third, one shows that this system admits cut (the only non-analytic rule); and, finally, one shows that in the presence of cut the original sequent calculus may be simulated in the focused one. When the polarised system is complete, the focusing principle holds.  

In $\lbi$ certain rules (the structural rules) have no natural placement in either the focused or the unfocused phases of proof-search. Thus, a design choice must be made: to eliminate/constrain these rules, or to permit them without restriction.
The first gives a stricter control proof-search regime, but the latter typically achieves a more well-behaved proof theoretic meta-theory. In this paper, we choose the former as our motivation is to study computational behaviour of proof-search in BI, the latter being recovered by familiar admissibility results. The only case where confinement is not possible is the \emph{exchange} rule. In standard sequent calculi the exchange rule is made implicit by working with a more convenient data-structure such as multisets as opposed to lists; however, the specific structure of bunches in BI means that a more complex alternative is required. The solution presented is to use nested multisets of two types (additive and multiplicative) corresponding to the two different context-formers/conjunctions.

In Section~\ref{sec:BI} we present the logic of Bunched Implications; in particular, Section~\ref{sec:BI_trad} and Section~\ref{sec:BI_calc} contain the background on BI (the syntax and sequent calculus respectfully); meanwhile, Section~\ref{sec:BI_nests} gives representation of bunches as nested multisets. Section~\ref{sec:foc} contains the focused system: first, in Section~\ref{sec:foc_pol} we introduce the polarised syntax; second, in Section~\ref{sec:foc_calc} we introduce the focused sequents calculus and some metatheory, most importantly the $\cut$-admissibility result; finally, in Section~\ref{sec:foc_comp} we give the completeness theorem, from which the validity of the focusing principle follows as a corollary. We conclude in Section~\ref{sec:conclusion} with some further discussion and future directions. 
	
\section{Re-presentations of BI} \label{sec:BI}
	\subsection{Traditional Syntax} \label{sec:BI_trad}
	The logic BI has a well-studied metatheory admitting familiar categorical, algebraic, and truth-functional semantics which have the expected dualities~\cite{Pym2004resource,Galmiche05,Pym02,Docherty19,Pym19} . 
	In practice, it is the free combination (or, more precisely, the fibration \cite{Gabbay1998,Pym02}) of intuitionistic logic (IL) and the multiplicative fragment of intuitionistic linear logic (MILL), which imposes the presence of two distinct context-formers in its sequent presentation. That is to say, the two conjunctions $\land$ and $*$ are represented at the meta-level by context-formers $;$ and $,$ in place of the usual commas for IL and MILL respectively. 
		\begin{definition}[Formula]
			Let \emph{\textsf{P}} be a denumerable set of propositional letters. The \emph{formulas} of BI, denoted by small Greek letters ($\phi, \psi, \chi, \ldots$), are defined by the following grammar, where $A \in \emph{\textsf{P}}$,
			$$\phi ::= \top \mid \bot \mid \i \mid A \mid (\phi \land \phi) \mid ( \phi \lor \phi )  \mid (\phi \to \phi) \mid (\phi * \phi) \mid (\phi \wand \phi)  $$
		 If $\circ \in \{ \land, \lor, \to, \top \}$ then it is an additive connective and if $\circ \in \{ *, \wand, \i \}$ then it is a  multiplicative connective. The set of all formulas is denoted $\Formulas$.
		\end{definition} 
		\begin{definition}[Bunch] 
			A \emph{bunch} is constructed from the following grammar, where $\phi \in \Formulas$,
			$$\Delta ::= \phi \mid \emptyset_+ \mid \emptyset_\times \mid (\Delta ; \Delta) \mid (\Delta , \Delta)$$
			The symbols $\emptyset_+$ and $\emptyset_\times$ are the additive and multiplicative units respectively, and the symbols $;$ and $,$ are the additive and multiplicative context-formers respectively. A bunch is \emph{basic} if it is a formula, $\emptyset_+$, or $\emptyset_{\times}$ and \emph{complex} otherwise. The set of all bunches is denoted $\B$, the set of complex bunches with additive root context-former by $\B^+$, and the set of complex bunches with multiplicative root context-former by $\B^\times$.
		\end{definition}	
			For two bunches $\Delta, \Delta' \in \B$ if $\Delta'$ is a sub-tree of $\Delta$, it is called a \emph{sub-bunch}.  
			We may use the standard notation $\Delta(\Delta')$ (despite its slight inpracticality) to denote that $\Delta'$ is a sub-bunch of $\Delta$, in which case $\Delta(\Delta'')$ is the result of replacing the occurrence of $\Delta'$ by $\Delta''$. 
			If $\delta$ is a sub-bunch of $\Delta$, then the context-former $\circ$ is said to be its principal context-former in $\Delta(\Delta' \circ \delta)$ (and $\Delta(\delta \circ \Delta')$). 	
		\begin{example}\label{ex:abunch} Let $\phi$, $\psi$ and $\chi$ be formulas, and let $\Delta = (\phi,(\chi;\emptyset_+));(\psi;(\psi;\emptyset_\times))$. The bunch may be written for example as $\Delta(\phi,(\chi;\emptyset_+))$ which means that we can have $\Delta(\phi;\phi)=(\phi;\phi);(\psi;(\psi;\emptyset_\times))$. 
		\end{example}
	
	\begin{definition}[Bunched Sequent]
			A bunched sequent is a pair of a bunch $\Delta$, called the context, and a formula $\phi$, denoted $\Delta \seq \phi$.
	\end{definition}
	Bunches are intended to be considered up-to \emph{coherent equivalence} $(\equiv)$.
	 It is the least relation satisfying:
	\begin{itemize}
		\item Commutative monoid equations for $;$ with unit $\emptyset_+$,
		\item Commutative monoid equations for $,$ with unit $\emptyset_{\times}$,
		\item Congruence: if $\Delta'\equiv\Delta'' $ then $\Delta(\Delta') \equiv \Delta(\Delta'')$. 
	\end{itemize} 
	It will be useful to have a measure on sub-bunches which can identify their distance from the root node.
		\begin{definition}[Rank]
		If $\Delta'$ is a sub-bunch of $\Delta$, then $\rho(\Delta')$ is the number of alternations of additive and multiplicative context-formers between the principal context-former of $\Delta'$, and the root context-former of $\Delta$. 
	\end{definition}
	Let $\Delta$ be a complex bunch, we use $\Delta' \in \Delta$ to denote that $\Delta'$ is a (proper) top-most sub-bunch; that is, $\Delta$ is a sub-bunch satisfying $\Delta \neq \Delta'$ but $\rho(\Delta') = 0$.
\begin{example} Let $\Delta$ be as in Example \ref{ex:abunch}, then $\rho(\emptyset_+)=2$ whereas $\rho(\emptyset_\times)=0$; 
	hence, $\psi$, $\emptyset_{\times}$ and $(\phi,(\chi, \emptyset_{\times})) \in\Delta$. Consider the parse-tree of $\Delta$:
\[
\xymatrix@R=2mm@C=2mm{
		 &		&   & ; \ar@{-}[dll] \ar@{-}[dr]	  &	      &  &\\
		 & ,\ar@{-}[dl] \ar@{-}[dr]	&	& 			  & ; \ar@{-}[dl] \ar@{-}[dr]     &  &\\
	\phi & 		& ; \ar@{-}[dl] \ar@{-}[dr] & \psi		  &	  	  & ; \ar@{-}[dl] \ar@{-}[dr]&\\
		 &\chi 	&	& \emptyset_+ & \psi  &  & \emptyset_{\times}
}
\]
	
	Reading upward from $\emptyset_+$ one encounters first $;$ which changes into $,$ and then back to $;$ so the rank is $2$; whereas counting up from $\emptyset_\times$ one only encounters $;$ so the rank is $0$. 
\end{example}

		\subsection{Sequent Calculus} \label{sec:BI_calc} 
		 The proof theory of BI is well-developed including familiar  Hilbert, natural deduction, sequent calculi, tableaux systems, and display calculi ~\cite{Pym02,Galmiche05,Brotherston10a}. In the foregoing we restrict attention to the sequent calculus as it more amenable to studying proof-search as computation, having local correctness while enjoying the completeness of analytic proofs.
		
		\begin{definition}[System $\lbi$]
			The bunched sequent calculus $\lbi$ is composed of the rules in Figure \ref{fig:lbi}.
		\end{definition}
		The classification of $\land$ as additive may seem dubious upon reading the $\rrn\land$ rule, but the designation arises from the use of the structural rules; that is, the $\rrn\land$ and $\rrn\to$ rules may be replaced by \emph{additive} variants without loss of generality. The presentation in Figure \ref{fig:lbi} is as in \cite{Pym02} and simply highlights the nature of the additive and multiplicative context-formers. Nonetheless, the choice of rule does affect proof-search behaviours, and the consequences are discussed in more detailed in Section \ref{sec:foc_pol}.
		 
		 \LBI

	\begin{lemma}[Cut-elimination]\label{lem:LBI-cutelim}
		If $\phi$ has a $\lbi$-proof, then it has a $\cut$-free $\lbi$-proof, i.e., a proof with no occurence of the $\cut$ rule.
	\end{lemma}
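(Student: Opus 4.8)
The plan is to prove cut-elimination for $\lbi$ by the standard Gentzen-style argument: first establish that the $\cut$ rule can be \emph{reduced}, in the sense that any proof ending in a single $\cut$ can be transformed into one whose cuts are in some precise sense simpler, and then iterate this reduction until all cuts are eliminated. Concretely, I would argue by a double induction, the outer induction on the complexity of the cut-formula $\phi$ and the inner induction on the sum of the heights of the two premise derivations. The key quantity to track is the \emph{cut-rank} (the maximal complexity of a cut-formula occurring in the proof); I would show that a topmost cut of maximal rank can always be pushed upward or replaced by cuts of strictly smaller rank, eventually vanishing.

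First I would set up the case analysis on the last rules applied in the two premises $\Delta' \seq \phi$ and $\Delta(\phi) \seq \chi$. The cases split naturally into three groups. In the \emph{principal} cases, $\phi$ is principal in both premises (e.g.\ $\phi = \psi_1 * \psi_2$ introduced by $\rrn\ast$ on the left and $\lrn\ast$ on the right); here I replace the cut on $\phi$ by one or two cuts on the immediate subformulas of $\phi$, which is where the outer induction on formula complexity does the work. In the \emph{commutative} cases, $\phi$ is not principal in at least one premise, and I permute the cut upward past the last rule of that premise, appealing to the inner induction on derivation height. Because $\lbi$ separates additive and multiplicative context-formers and tracks them explicitly in rules such as $\lrn\wand$, $\lrn\to$, $\rrn\ast$, and $\rrn\land$, I must be careful that when permuting the cut past a rule acting on the context $\Delta(\cdot)$, the displayed sub-bunch containing $\phi$ really does commute with that rule; the sub-bunch notation $\Delta(\Delta')$ and the congruence property of $\equiv$ make this bookkeeping precise.

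The main obstacle, and the genuinely BI-specific difficulty, lies in handling the structural rules $\weak$, $\cont$, and especially $\exch$ in the commutative cases. For contraction the classical problem appears: when $\cont$ duplicates the sub-bunch containing the cut-formula, permuting the cut upward naively duplicates the entire left premise and can increase derivation height, so the inner induction on height alone does not suffice. The standard remedy is to strengthen the inductive statement to a \emph{multicut} (or context-splitting cut) that simultaneously discharges all copies of $\phi$ produced by contraction, so that a single multicut of the same rank replaces the original cut; the outer induction on formula complexity then still controls termination. I would therefore formulate and prove cut-elimination for this multicut rule, with the principal cases reducing rank and the structural-rule cases absorbed into the multicut. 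The exchange rule is comparatively benign: since bunches are considered up to coherent equivalence $\equiv$ and $\exch_{(\Delta \equiv \Delta')}$ merely rewrites along $\equiv$, every cut commutes with $\exch$ freely, and I would simply note that cuts may be performed modulo $\equiv$ throughout, so that $\exch$ never genuinely blocks a permutation.

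Finally, once the multicut is shown admissible, the ordinary $\cut$ is the special case where $\phi$ occurs exactly once, so cut-admissibility for $\lbi$ follows, and an entire proof is made cut-free by repeatedly eliminating a topmost cut of maximal rank. I expect the principal cases to be routine given the symmetry of the left/right rules in Figure~\ref{fig:lbi}, and the real care to be needed in verifying that the context manipulations respect the additive/multiplicative distinction and that the weakening and contraction permutations go through with the multicut formulation.
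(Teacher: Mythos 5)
The first thing to say is that the paper contains no proof of Lemma~\ref{lem:LBI-cutelim} at all: it is stated as background, imported from the established proof theory of BI~\cite{Pym02}, and the only cut-elimination argument the authors actually carry out is Theorem~\ref{lem:fbicutadmi}, for the focused system $\fbi$, not for $\lbi$. So there is no ``paper proof'' for your attempt to match; the relevant comparison is with the literature on one side and with the paper's $\fbi$ argument on the other.

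Measured against those, your sketch is the standard route and is essentially viable, with one BI-specific caveat that deserves to be made explicit. In $\lbi$ the rule $\cont$ duplicates arbitrary \emph{bunches}, not single formulas, so the multicut you invoke cannot be the usual ``cut several occurrences in a flat multiset'' rule: it must be formulated as a simultaneous substitution of $\Delta'$ for a chosen set of leaf-occurrences of $\phi$ scattered across the bunch tree. The contraction case then closes because, after substitution, the two copies of the contracted sub-bunch remain syntactically identical and can be re-contracted (bunch-level contraction is exactly what saves you); this is the crux of the whole result, and historically it is precisely the case where published proofs of cut elimination for $\lbi$ needed repair, so calling it ``the standard remedy'' undersells where the real work lies. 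Your treatment of $\exch$ (tracking formula occurrences across coherent equivalence, which neither creates nor destroys them) is correct and matches the paper's general strategy of quotienting exchange away. Finally, it is worth noting that when the authors do prove cut elimination (Lemmas~\ref{lem:goodcutelim} and~\ref{lem:badcutelim} for $\fbi$), they handle contraction by a different device than yours: no multicut, but a cut-rank that includes \emph{cut-duplicity} (the number of contraction instances above the cut) together with a well-founded multiset ordering on the cuts of a proof, so that duplicating a cut above a contraction still strictly decreases the measure. Both devices work; the multiset-ordering route avoids introducing and re-verifying a generalized cut rule, while your multicut route keeps the termination argument a plain double induction on cut-formula complexity and height.
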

	Throughout, unless specified otherwise, we take proof to mean $\cut$-free proof. Moreover, if $\mathsf{L}$ is a sequent calculus we use $\vdash_\mathsf{L }\Delta \seq \phi$ to denote that there is an $\mathsf{L}$-proof of $\Delta \seq \phi$. Further, if $\mathsf{R}$ is a rule, then we may denote $\mathsf{L+R}$ to denote the sequent calculus combining the rules of $\mathsf{L}$ with $\mathsf{R}$.
	
	 The following result, that a generalised version of the axiom is derivable in $\lbi$, will allow for such sequents to be used in proof-construction later on.
	
	\begin{lemma}\label{lem:formulaAx}
		For any formula $\phi$, $\vdash_{\lbi} \phi \seq \phi$.
	\end{lemma}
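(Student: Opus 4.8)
The plan is to proceed by structural induction on the formula $\phi$, in each case reducing the goal $\phi \seq \phi$ to identity sequents on immediate subformulas, which are then supplied by the induction hypothesis. The only primitive available for identities in $\lbi$ is the atomic axiom $\rn{Ax}$, so the content of the lemma is precisely this identity-expansion.

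For the base cases: when $\phi = A$ is a propositional letter the sequent $A \seq A$ is exactly $\rn{Ax}$. The unit cases follow by pairing a right rule with the matching left rule: from $\emptyset_+ \seq \top$ (an instance of $\rrn\top$) the rule $\lrn\top$ yields $\top \seq \top$; from $\emptyset_\times \seq \i$ (an instance of $\rrn\i$) the rule $\lrn\i$ yields $\i \seq \i$; and $\bot \seq \bot$ is immediate from $\lrn\bot$, which has no premises.

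For the inductive cases the strategy is uniform: first apply the left rule for the principal connective to decompose the antecedent, then apply the corresponding right rule, leaving identity sequents on the components. Concretely, $\lrn\land$ turns $\psi \land \chi \seq \psi \land \chi$ into $(\psi ; \chi) \seq \psi \land \chi$, and $\rrn\land$ then splits this into $\psi \seq \psi$ and $\chi \seq \chi$; the multiplicative conjunction is identical with $\lrn\ast$, the comma, and $\rrn\ast$. For disjunction, $\lrn\lor$ reduces $\psi \lor \chi \seq \psi \lor \chi$ to the two sequents $\psi \seq \psi \lor \chi$ and $\chi \seq \psi \lor \chi$, each closed by $\rrn\lor_1$ (resp. $\rrn\lor_2$) above an induction hypothesis. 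In every case the remaining leaves are identities on strict subformulas, discharged by the induction hypothesis.

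The two implication cases are where the bookkeeping requires care and constitute the main (though still routine) obstacle. For $\psi \to \chi$ I would first apply $\rrn\to$ to obtain $(\psi \to \chi) ; \psi \seq \chi$, then $\lrn\to$ with $\Delta' := \psi$ and $\Delta'' := \emptyset_+$, which --- reading the rule's schematic antecedent and consequent as $\psi$ and $\chi$ --- produces the premises $\psi \seq \psi$ and $(\emptyset_+ ; \chi) \seq \chi$. The wand case is the exact multiplicative analogue, using $\rrn\wand$, then $\lrn\wand$ with $\Delta'' := \emptyset_\times$, and yielding $\psi \seq \psi$ and $(\emptyset_\times , \chi) \seq \chi$. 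The subtlety is that the left rules leave a spurious unit ($\emptyset_+$ or $\emptyset_\times$) and fix a particular ordering of the context, so these steps are carried out modulo coherent equivalence $\equiv$ --- that is, with appeals to $\exch$ to absorb the unit and commute the context-former --- after which each open leaf is again an identity on a subformula available from the induction hypothesis.
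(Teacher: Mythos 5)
Your proof is correct and follows essentially the same route as the paper, which simply states that the result ``follows from induction on size of $\phi$'' and leaves the case analysis implicit. Your elaboration of the cases --- including the use of $\exch$ (coherent equivalence) to absorb the spurious units $\emptyset_+$ and $\emptyset_\times$ left by $\lrn\to$ and $\lrn\wand$ --- is exactly the bookkeeping the paper's one-line proof elides.
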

	\begin{proof}
		Follows from induction on size of $\phi$.
	\qed \end{proof}	

	The remainder of this section is the meta-theory required to control the structural rules, which pose the main issue to the study of proof-search in BI.

\begin{lemma}\label{lem:weakelim}
	The following rules are derivable in $\lbi$, and replacing $\weak$ with them does not affect the completeness of the system.
	$$
	\infer[\rn{Ax'}]{\Delta;A \seq A}{}
	\quad
	\infer[\rrn{{\i}'}]{\Delta;\emptyset_\times \seq \i}{}
	\quad
	\infer[\rrn{\top'}]{\Delta;\emptyset_+ \seq \top}{}
	$$
	$$
	\infer[\rrn{\ast'}]{(\Delta,\Delta');\Delta'' \seq \phi*\psi }{\Delta \seq \phi & \Delta' \seq \psi}
	\quad
	\infer[\lrn{\wand'}]{\Delta(\Delta',\Delta'',(\Delta''';\phi \wand \psi)) \seq \chi }{\Delta' \seq \phi & \Delta(\Delta'', \psi) \seq \chi}
	$$
\end{lemma}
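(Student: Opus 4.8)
The plan is to establish the two assertions separately: that each primed rule is derivable in $\lbi$, and that the system $\lbi^{-}$, obtained from $\lbi$ by deleting $\weak$ and adjoining the five primed rules, proves exactly the sequents that $\lbi$ proves.

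For derivability I would, in each case, apply the corresponding unprimed rule and append a single instance of $\weak$ (together with $\exch$ and the unit equations of $\equiv$). Thus $\rn{Ax'}$ is $\rn{Ax}$ weakened at the whole context $A$ by $\Delta$; $\rrn{\i'}$ and $\rrn{\top'}$ arise from $\rrn\i$ and $\rrn\top$ using $\emptyset_\times;\Delta\equiv\Delta;\emptyset_\times$ and $\emptyset_+;\Delta\equiv\Delta$; $\rrn{\ast'}$ is $\rrn\ast$ weakened at its multiplicative conclusion $\Delta,\Delta'$ by $\Delta''$; and $\lrn{\wand'}$ is $\lrn\wand$ weakened at the displayed occurrence by $\Delta'''$. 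This makes every primed rule admissible in $\lbi$, so an $\lbi^{-}$-proof is in particular an $\lbi$-proof and $\vdash_{\lbi^{-}}\Delta\seq\phi$ implies $\vdash_{\lbi}\Delta\seq\phi$.

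For the converse I would prove that $\weak$ is admissible in $\lbi^{-}$, i.e. that $\vdash_{\lbi^{-}}\Delta(\Delta')\seq\chi$ entails $\vdash_{\lbi^{-}}\Delta(\Delta';\Delta'')\seq\chi$ for every $\Delta''$. Granting this, completeness preservation follows by a routine induction translating an $\lbi$-proof into $\lbi^{-}$: every rule other than $\weak$ is reapplied verbatim to the translated premises, and each $\weak$ step is discharged by the admissibility statement. The admissibility I would prove by induction on the height of the $\lbi^{-}$-derivation, permuting the weakening upward and case-splitting on the last rule. At the leaves the weakening is absorbed, turning $\rn{Ax}$, $\rrn\i$, $\rrn\top$ into $\rn{Ax'}$, $\rrn{\i'}$, $\rrn{\top'}$ (and leaving an already-primed leaf of the same shape). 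For the additive and unary logical rules and for $\cont$, the weakened sub-bunch lies either in the passive context frame or wholly inside one premise, so the weakening commutes past the rule and is finished off by the induction hypothesis on that premise.

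The crux, and the whole reason the primed rules are introduced, is the interaction of an additive weakening with the multiplicative rules $\rrn\ast$ and $\lrn\wand$, where the context splits multiplicatively. When the weakened sub-bunch lies strictly inside one branch it is routed into that premise by the induction hypothesis and the unprimed rule is reapplied; the difficulty is a weakening that becomes stuck at the split — additively attached across the multiplicative frame — because $(\Sigma,\Sigma');\Gamma$ is not equivalent to $(\Sigma;\Gamma),(\Sigma';\Gamma)$ and so cannot be distributed to a single premise. The rules $\rrn{\ast'}$ and $\lrn{\wand'}$ are designed to absorb exactly these stuck weakenings, at the multiplicative conclusion of $\rrn\ast$ and at the principal $\wand$-occurrence of $\lrn\wand$ respectively, so that no further permutation is needed. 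I expect the bookkeeping to be the main obstacle: one must track sub-bunch positions up to the coherent equivalence $\equiv$ — re-associating and commuting the multiplicative frame to expose the weakened node in the shape demanded by a primed rule — confirm the height measure still decreases, and verify that every placement of the weakened node relative to the principal formula and its multiplicative siblings is covered either by a primed rule or by a branch-local appeal to the induction hypothesis.
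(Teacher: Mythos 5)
Your proposal is correct and follows essentially the same route as the paper: first derive each primed rule as its unprimed counterpart followed by an instance of $\weak$ (modulo $\exch$), then show $\weak$ is admissible in the modified calculus by permuting it upward through the derivation. The paper compresses this second step into the phrase ``standard permutation argument''; your elaboration of it --- in particular the observation that the primed rules exist precisely to absorb weakenings stuck at multiplicative splits in $\rrn\ast$ and $\lrn\wand$ --- is exactly the content that phrase is hiding.
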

\begin{proof}
	We can construct in $\lbi$ derivations with the same premisses and conclusion as these rules by use of the structural rules. Let $\lbi'$ be $\lbi$ without $\weak$ but with these new rules (retaining also $\rrn\ast,\lrn\wand,\rrn\i,\rrn\top,$ and $\rn{Ax}$), then $\weak$ is admissible in $\lbi'$ using standard permutation argument.\qed \end{proof}

One may regard the above modification to $\lbi$ as forming a new calculus, but since all the new rules are derivable it is really a restriction of the calculus, in the sense that all proofs in the new system have equivalent proofs in $\lbi$ differing only by explicitly including instances of weakening.

\subsection{Nested Calculus} \label{sec:BI_nests}
Originally, sequents in the calculi for classical and intuitionistic logics (\textsf{LK} and \textsf{LJ}, respectively) were introduced as lists, and a formal \emph{exchange} rule was required to permute elements when needed for a logical rule to be applied~\cite{Gentzen1969}.  However, in practice, the exchange rule is often suppressed, and contexts are simply presented as multisets of formulas. This reduces the number of steps/choices being made during proof-search without increasing the complexity of the underlying data structure. Bunches have considerably more structure than lists, but a quotient with respect to coherent equivalence can be made resulting in two-sorted nested multisets; this was first suggested in \cite{Donnelly05}, though never formally realised.
	\begin{definition}[Two-sorted Nest]
		Nests $(\Gamma)$ are formulas or multisets, ascribed either additive $(\Sigma)$, or multiplicative $(\Pi)$ kind, containing nests of the opposite kind:
		\begin{align*}
		\Gamma := \Sigma \mid \Pi \qquad	\Sigma := \phi  \mid \{\Pi_1,...,\Pi_n\}_+  \qquad	\Pi := \phi \mid \{\Sigma_1,...,\Sigma_n\}_\times 
		\end{align*}
		The constructors are multiset constructors which may be empty in which case the nests are denoted $\varnothing_+$ and $\varnothing_\times$ respectively. No multiset is a singleton; and the set of all nests is denoted $\nests$.
	\end{definition}
	Given nests $\Lambda$ and $\Gamma$, we write $\Lambda \in \Gamma$ to denote either that $\Lambda =\Gamma$, if $\Gamma$ is a formula, or that $\Lambda$ is an element of the multiset $\Gamma$ otherwise. Furthermore, we write $\Lambda \subseteq \Gamma$ to denote $\forall \gamma \in \nests$ if $\gamma \in \Lambda$ then $\gamma \in \Gamma$. 
	
	We will depart from the standard, yet impractical subbunch notation, and adopt a context notation for nests instead.
	We write $\Gamma\{\cdot\}_+$ (resp. $\Gamma\{\cdot\}_\times$) for a nest with a hole within one of its additive (resp. multiplicative) multisets.
	The notation $\Gamma\{\Lambda\}_+$ (resp. $\Gamma\{\Lambda\}_\times$), denotes that $\Lambda$ is a sub-nest of $\Gamma$ of additive (resp. multiplicative) kind; we may use $\Gamma\{\Lambda\}$ when the kind is not specified. In either case $\Gamma\{\Lambda'\}$ denotes the substitution of $\Lambda$ for $\Lambda'$. A promotion in the syntax tree may be required after a substitution either to handle a singleton or an improper alternation of constructor types.
	
	\begin{example}\label{ex:nests} The following inclusions are valid,
		\[
		 \{\phi \,, \chi  \, \}_\times \in \Big \{ \, \{\phi \,, \chi  \, \}_\times , \psi  \Big \}_+ \subseteq \Big\{ \, \{\phi \,, \chi \, \}_\times , \psi \,, \psi \,, \emptyset_\times \, \Big\}_+ = \Gamma\{\{\phi \,, \chi  \, \}_\times\}_+
		\]
		It follow that $\Gamma\{\{\phi\,,\phi\}_+\}_+ = \{ \, \phi \,, \phi \,, \psi \,, \psi \,, \emptyset_\times \, \}_+$. Note the absence of the $\{ \cdot \}_+$ constructor after substitution, this is due to a promotion in the syntax tree to avoid having two nested additive constructors. Similarly, since $\emptyset_\times$ denotes the empty multiset of multiplicative kind, substituting $\chi$ with it gives $\{ \phi , \psi \,, \psi \,, \emptyset_\times \, \}_+ $; that is, first the improper $\{\phi, \emptyset_\times\}_\times$ becomes $\{\phi\}_\times$; then, the resulting singleton $\{\phi\}_\times$ is promoted to $\phi$.
	\end{example}
	Typically we will only be interested in fragments of sub-nests so we have the following abuse of notation, where $\circ \in \{+,\times\}$:
\[\Gamma\{\{\Pi_1,...,\Pi_i\}_\circ, \Pi_{i+1},..,\Pi_n\}_\circ := \Gamma\{\Pi_1,...,\Pi_n\}_\circ \]
	The notion of rank has a natural analogue in this setting.
	\begin{definition}[Depth, Rank]
		Let $\circ \in \{+\,,\times\}$ be a nest, we define the depth on $\B$ as follows:
		$$
		\delta(\phi) := 0 \qquad \delta(\{\Gamma_1,...,\Gamma_n\}_\circ) := \max\{\delta(\Gamma_1),...,\delta(\Gamma_n)\}+1 $$
	\end{definition}
	The equivalence of the two presentations, bunches and nests, follows from a moral (in the sense that bunches are intended to be considered modulo congruence) inverse between a \emph{nestifying} function $\eta$ and a \emph{bunching} function $\beta$. The transformation $\beta$ is simply going from a tree with arbitrary branching to a binary one, and $\eta$ is the reverse.
	
	\begin{definition}[Canonical Translation]
		The canonical translation $\nestify:\B \to \nests$ is defined recursively as follows,
		$$
		\nestify(\Delta) := 
		\begin{cases}
		\Delta & \text{if } \Delta \in \Formulas \cup \{\emptyset_+,\emptyset_\times\} \\
		\{\nestify(\Delta') \in \nests \mid \rho(\Delta')=1 \text{ and } \Delta' \in \B^\times \}_+ & \text{if } \Delta \in \B^+  \\
		\{\nestify(\Delta') \in \nests \mid \rho(\Delta')=1 \text{ and } \Delta' \in \B^+ \}_\times & \text{if } \Delta \in \B^\times 
		\end{cases}
		$$
		The canonical translation $\bunch:\nests \to \B$ is defined recursively as follows,
		$$
		\bunch(\Gamma) := 
		\begin{cases}
		\Gamma & \text{ if } \Gamma \in \Formulas \cup \{\emptyset_+, \emptyset_\times\} \\
		\bunch(\Pi_1);(\bunch(\Pi_2);...)  & \text{ if } \Gamma = \{\Pi_1, \Pi_2,...\}_+  \\
		\bunch(\Sigma_1),(\bunch(\Sigma_2),...)  & \text{ if } \Gamma = \{\Sigma_1, \Sigma_2,...\}_\times
		\end{cases}
		$$
	\end{definition} 
	\begin{example} Applying $\eta$ to the bunch in Example \ref{ex:abunch} gives the nest in Example~\ref{ex:nests}:
		$$
		\xymatrix@R=2mm@C=2mm{
		&	& & + \ar@{-}[dll] \ar@{-}[dl] \ar@{-}[dr] \ar@{-}[drr] & & \\
		&\times \ar@{-}[dl] \ar@{-}[dr]   &\psi& & \psi& \emptyset_\times \\
		\psi& &\chi		& 	& 	&
		}
		$$
	\end{example}
	\begin{lemma} \label{lem:transinverse}
		The translations are inverses up-to congruence; that is,
		\begin{enumerate}
			\item if $\Delta \in \B$ then $(\bunch \circ \eta)(\Delta) \equiv \Delta$;
			\item if $\Gamma \in \nests$ then $(\nestify \circ \bunch)(\Gamma) \equiv \Gamma$;
			\item let $\Delta, \Delta' \in \B$, then $\Delta \equiv \Delta'$ if and only if $\nestify(\Delta) = \nestify(\Delta')$.
		\end{enumerate} 
	\end{lemma}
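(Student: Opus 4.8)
The plan is to establish the two round-trip identities (1) and (2) directly by induction, and then to obtain the characterisation (3) as a consequence, with its two directions handled asymmetrically: the right-to-left implication falls straight out of (1), while the left-to-right implication is proved by exploiting the fact that $\equiv$ is the \emph{least} congruence closed under the monoid equations. Throughout, the operational content of $\eta$ is that it flattens the maximal top layer of same-kind context-formers of a bunch into a single multiset (absorbing the matching unit and collecting the remaining leaves and opposite-kind sub-bunches), while $\bunch$ undoes this by re-associating an $n$-ary multiset into a right-nested binary tree; the two are inverse up to exactly the associativity, commutativity and unit data that $\equiv$ quotients out.

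For (1) I would induct on the size of $\Delta$. If $\Delta$ is basic then $\eta(\Delta) = \Delta = \bunch(\Delta)$ and there is nothing to prove. If $\Delta \in \Ba$, write $\Delta$ as its flattened additive layer $\Delta_1; \cdots ; \Delta_k$, where each $\Delta_i$ is basic or lies in $\Bm$; by definition $\eta(\Delta)$ is the additive multiset whose elements are the $\eta(\Delta_i)$ (with additive units $\emptyset_+$ discarded and singletons promoted). Applying $\bunch$ re-associates this into $\bunch(\eta(\Delta_1)) ; (\bunch(\eta(\Delta_2)); \cdots)$, and the induction hypothesis gives $\bunch(\eta(\Delta_i)) \equiv \Delta_i$; closing under the monoid laws for $;$ (associativity, commutativity, and the unit law to reinstate any discarded $\emptyset_+$) yields $\bunch(\eta(\Delta)) \equiv \Delta$. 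The case $\Delta \in \Bm$ is symmetric. Statement (2) is the mirror-image induction on the depth $\delta(\Gamma)$: for $\Gamma = \{\Pi_1,\dots,\Pi_n\}_+$ each $\bunch(\Pi_i)$ is basic or multiplicatively rooted, so it contributes exactly one leaf to the top additive layer of $\bunch(\Gamma)$; hence $\eta$ recovers precisely $\{\eta(\bunch(\Pi_1)),\dots,\eta(\bunch(\Pi_n))\}_+$, which equals $\Gamma$ by the induction hypothesis. (Here equality of nests is plain multiset equality, which already internalises commutativity, so the claimed $\equiv$ is just $=$.)

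For (3) the right-to-left direction is immediate: if $\eta(\Delta) = \eta(\Delta')$ then $\bunch(\eta(\Delta)) = \bunch(\eta(\Delta'))$ as bunches, and (1) gives $\Delta \equiv \bunch(\eta(\Delta)) = \bunch(\eta(\Delta')) \equiv \Delta'$. For the left-to-right direction I would show that $R := \{(\Delta,\Delta') \mid \eta(\Delta) = \eta(\Delta')\}$ is a congruence on $\B$ containing each generating equation of $\equiv$; since $\equiv$ is by definition the least such relation, $\Delta \equiv \Delta'$ then forces $(\Delta,\Delta') \in R$. The monoid equations are discharged by the defining behaviour of $\eta$: commutativity and associativity hold because $\eta$ returns flattened multisets, and the unit laws $\Delta;\emptyset_+ \equiv \Delta$ and $\Delta,\emptyset_\times \equiv \Delta$ hold because the absorbed unit contributes no element to the corresponding multiset. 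The congruence (substitutivity) clause reduces to the fact that $\eta$ is compositional across the promotion steps: replacing a sub-bunch by one with the same image under $\eta$ leaves the whole nest unchanged.

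The main obstacle is the careful bookkeeping of the normalisation (``promotion'') conventions that make nests canonical: the prohibition on singletons, the identification of the empty additive (resp. multiplicative) multiset with $\emptyset_+$ (resp. $\emptyset_\times$), and the asymmetric treatment of units — an additive unit is absorbed inside an additive layer but survives as an element inside a multiplicative layer, and dually. These conventions are exactly what must be threaded through every inductive step, and the substitutivity clause of $R$ in (3) is delicate precisely because $\eta$ is not literally homomorphic on the binary-tree structure: it acts layer-by-layer, so one must verify that a local replacement neither triggers nor suppresses a promotion elsewhere. Once these conventions are fixed and shown stable under both translations, all three parts follow routinely.
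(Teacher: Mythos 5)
Your proposal is correct, and for parts (1) and (2) it coincides with the paper's own proof: both proceed by induction on the depth (equivalently, size) of the bunch or nest, with exactly the bookkeeping you describe about unit absorption and singleton promotion (the paper explicitly flags the case of a context consisting entirely of units, which your unit-law reinstatement handles). The divergence is in part (3). The paper likewise splits the biconditional into an easy direction obtained from part (1) and a hard direction proved separately, but it states that the remaining direction ``proceeds by induction on depth'', i.e.\ a structural induction on bunches; you instead prove $\Delta \equiv \Delta' \Rightarrow \nestify(\Delta) = \nestify(\Delta')$ by the least-congruence argument: the kernel $R = \{(\Delta,\Delta') \mid \nestify(\Delta) = \nestify(\Delta')\}$ is an equivalence relation validating the commutative-monoid equations and closed under the congruence clause, hence contains $\equiv$ by minimality. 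This is a genuinely different key step --- an induction on the generation of $\equiv$ rather than on the syntax of bunches --- and it is arguably better suited to the statement: since $\equiv$ is \emph{defined} as a least congruence, a depth induction would in any case have to unfold into a case analysis on the generating equations together with a substitutivity argument, which your formulation isolates explicitly as the lemma that $\nestify(\Delta(\Delta'))$ depends on $\Delta'$ only through $\nestify(\Delta')$; you rightly flag this as the delicate point, since $\nestify$ acts layer-by-layer and local replacements interact with promotions. One further point in your favour: read literally, the paper assigns part (1) to the forward implication and the depth induction to the reverse one, which is the less natural distribution --- part (1) yields $\nestify(\Delta) = \nestify(\Delta') \Rightarrow \Delta \equiv \Delta'$ immediately, as you use it, but gives no direct handle on the converse --- so your assignment of techniques to the two directions is the one that goes through without further argument.
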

\begin{proof}
	The first two statements follow by induction on the depth (either for bunches or nests), where one must take care to consider the case of a context consisting entirely of units. The third statement employs the first in the forward direction, and proceeds by induction on depth in the reverse direction. 
\qed \end{proof}
\begin{definition}[System $\eta\lbi$]
	The nested sequent calculus $\eta\lbi$ is composed of the rules in Figure~\ref{fig:taulbi}, where the metavariables denote possibly empty nests.
\end{definition}
Observe the use of metavariable $\Gamma'$ instead of $\Pi$ (resp. $\Sigma$) as sub-contexts in Figure~\ref{fig:taulbi}. This allows classes of inferences such as
\[
\infer[\rrn\ast]{\{\Sigma_0,...,\Sigma_n \}_\times \seq \phi *\psi}{
	\{\Sigma_0,...,\Sigma_i\}_\times \seq \phi
	&
	\{\Sigma_{i+1},...,\Sigma_n \}_\times \seq \phi
	}
\]
to be captured by a single figure.  In practice it implements the abuse of notation given above: $$\{\{\Sigma_0,...,\Sigma_i\}_\times, \{\Sigma_{i+1},...,\Sigma_n\}_\times \}_\times  \seq \phi *\psi$$ 

\etalbi

This system	is a new and very convenient presentation of $\lbi$, not \emph{per se} a development of the proof theory for the logic. 
\begin{lemma}[Soundness and Completeness of $\nestify \lbi$] \label{lem:etalbi}
	Systems $\lbi$ and $\eta\lbi$ are equivalent:
	\begin{enumerate}
		\item[] Soundness: If $\vdash_{\eta\lbi} \Gamma \seq \phi$  then $\vdash_{\lbi} \beta(\Gamma) \seq \phi$;
		\item[] Completeness: If $\vdash_{\lbi} \Delta \seq \phi$  then $\vdash_{\eta\lbi} \nestify(\Delta) \seq \phi$.
	\end{enumerate}
\end{lemma}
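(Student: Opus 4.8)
The plan is to prove each direction by induction on the height of the given derivation, translating one rule at a time. The technical engine common to both directions is a \emph{compositionality} fact: the translations commute with context substitution up to coherent equivalence. Concretely, for any nest-context $\Gamma\{\cdot\}$ and nest $\Lambda$ there is a bunch-context $C(\cdot)$, obtained by translating $\Gamma\{\cdot\}$, with $\beta(\Gamma\{\Lambda\}) \equiv C(\beta(\Lambda))$, and dually $\nestify$ respects the bunch-context notation $\Delta(\cdot)$. The reason for working modulo $\equiv$ is that it absorbs exactly the promotions and singleton-collapses that nest substitution introduces, since each such promotion is an instance of associativity or of a unit law for $;$ or $,$ (as illustrated in Example~\ref{ex:nests}). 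I would record this as a preliminary lemma, proved by induction on depth using Lemma~\ref{lem:transinverse}.

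For soundness I induct on an $\nestify\lbi$-derivation of $\Gamma \seq \phi$ and case-split on the last rule. Applying $\beta$ to the conclusion and premises and pushing it through contexts by compositionality, each $\nestify\lbi$ rule becomes, modulo $\equiv$, either a logical rule of $\lbi$ or one of the weakening-absorbed rules of Lemma~\ref{lem:weakelim}: the axioms $\rn{Ax}$ and $\rrn\top$ map to $\rn{Ax'}$ and $\rrn{\top'}$, the context-gathering $\rrn\ast$ maps to $\rrn{\ast'}$, the rule $\lrn\wand$ maps to $\lrn{\wand'}$, the primitive $\cont$ maps to $\cont$, and the remaining rules map to their identically-named $\lbi$ counterparts. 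Since $\beta$ produces a fixed right-nested association of each multiset whereas the $\lbi$ rule expects a particular shape, every step is prefaced by finitely many uses of $\exch$ to realign the context; this is legitimate because $\equiv$-equivalent bunches are interderivable by $\exch$. Lemma~\ref{lem:weakelim} guarantees that the derived rules used here are available in $\lbi$, so the output is a genuine $\lbi$-proof of $\beta(\Gamma) \seq \phi$.

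For completeness I induct on a (cut-free) $\lbi$-derivation of $\Delta \seq \phi$, producing an $\nestify\lbi$-derivation of $\nestify(\Delta) \seq \phi$. The rule $\exch$ is free: by Lemma~\ref{lem:transinverse}(3), $\Delta \equiv \Delta'$ implies $\nestify(\Delta) = \nestify(\Delta')$, so an exchange step translates to the identity on nested sequents. For $\cont$ I use the primitive contraction rule of $\nestify\lbi$. The remaining reconciliation is between the context-splitting rules of $\lbi$ and the contraction-/weakening-absorbed rules of $\nestify\lbi$, and for this the one auxiliary fact needed is that \emph{weakening is height-preserving admissible in $\nestify\lbi$}: if $\vdash_{\nestify\lbi} \Gamma\{\Lambda\}_+ \seq \chi$ then $\vdash_{\nestify\lbi} \Gamma\{\Lambda,\Lambda'\}_+ \seq \chi$. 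This follows by the standard permutation argument, using that weakening is already built into the $\nestify\lbi$ axioms. With admissible weakening in hand, the additive right-rule $\rrn\land$ (whose $\nestify\lbi$ form shares a single context across its premises) and the rule $\weak$ itself are simulated by first weakening each premise to the common context $\nestify(\Delta;\Delta')$ and then firing the rule, while the multiplicative rules and the left rules translate directly via compositionality with the appropriate structural steps inserted. Equivalently, one may route completeness through the weakening-free calculus $\lbi'$ of Lemma~\ref{lem:weakelim}, whose rules stand in near one-to-one correspondence with those of $\nestify\lbi$.

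The main obstacle is not any single logical rule but the structural bookkeeping forced by the change of data-structure. The two delicate points are (i) making the compositionality lemma precise in the presence of promotion and singleton-elimination, where one must verify that every promotion is an instance of associativity or a unit law and hence invisible to $\equiv$; and (ii) confining the three structural rules correctly, namely that $\exch$ disappears under $\nestify$, that $\weak$ is admissible rather than primitive in $\nestify\lbi$, and that $\cont$ is matched with the single retained contraction rule. Once these are settled, the rule-by-rule simulation in both directions is routine.
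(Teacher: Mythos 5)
Your proposal follows the same skeleton as the paper's (one-sentence) proof---structural induction over derivations with Lemma~\ref{lem:transinverse} mediating the change of data structure---and your compositionality lemma and confinement of the structural rules are a fair rendering of what the paper leaves implicit. However, two of your cases fail as written. In the soundness direction you assert that the rules not covered by Lemma~\ref{lem:weakelim} ``map to their identically-named $\lbi$ counterparts''. This is false for $\rrn\land$ and $\lrn\to$: in $\nestify\lbi$ these rules are context-sharing (both premises of $\rrn\land$ carry the same $\Gamma$, both premises of $\lrn\to$ the same $\Gamma'$), while their $\lbi$ namesakes split the context. Translating an $\nestify\lbi$ instance of $\rrn\land$ and applying $\lbi$'s $\rrn\land$ yields $\beta(\Gamma);\beta(\Gamma)\seq\phi\land\psi$, and no amount of $\exch$ recovers $\beta(\Gamma)\seq\phi\land\psi$, because coherent equivalence contains no contraction. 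These two cases must be closed by an explicit instance of $\cont$---harmless, since $\cont$ is primitive in $\lbi$, but your case analysis is incomplete without it.

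In the completeness direction, the entire weight rests on your claim that weakening is admissible in $\nestify\lbi$ ``using that weakening is already built into the $\nestify\lbi$ axioms''. That holds for $\rn{Ax}$ (conclusion $\{\Gamma,A\}_+\seq A$) and $\rrn\top$ (conclusion $\Gamma\seq\top$), but not for $\rrn\i$ as printed in Figure~\ref{fig:taulbi}, whose conclusion is exactly $\emptyset_\times\seq\i$. With that rule, weakening is in fact not admissible: $\emptyset_\times;A\seq\i$ is $\lbi$-provable (by $\rrn\i$ followed by $\weak$), yet no rule of $\nestify\lbi$ can conclude $\{\emptyset_\times,A\}_+\seq\i$, so the permutation argument---and with it your simulation of $\weak$, $\rrn\land$ and $\lrn\to$---breaks at exactly this axiom. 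Your fallback route through the weakening-free calculus of Lemma~\ref{lem:weakelim} hits the same wall: its rule $\rrn{{\i}'}$ translates to $\{\nestify(\Delta),\emptyset_\times\}_+\seq\i$, which is not an instance of $\rrn\i$. The argument goes through only if $\rrn\i$ is read with an ambient additive context, i.e.\ with conclusion $\{\Gamma,\emptyset_\times\}_+\seq\i$, consistently with $\rrn{{\i}'}$ of Lemma~\ref{lem:weakelim} and with the $\rrn\i$ rule of $\fbi$ in Figure~\ref{fig:fbi}; your proof needs to state this reading explicitly, since admissibility of weakening is precisely the point where the discrepancy matters.
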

\begin{proof}
Each claim follows by induction on the context, appealing to Lemma \ref{lem:transinverse} to organise the data structure for the induction hypothesis, without loss of generality. 
\end{proof}
\begin{example}\label{ex:nestedproof}
	The following is a proof in $\eta \lbi$. 
	\[\scalebox{.9}{$
	\infer[\rrn\wand]{\emptyset_\times \seq (A * (B \land C)) \wand ((A * B)\land (A * C))}{
		\infer[\lrn\ast]{A * (B \land C)  \seq (A * B)\land (A * C)}{	
			\infer[\rrn\land]{\{A , (B \land C) \}_\times \seq (A * B)\land (A * C)}{
				\infer[\lrn\land]{\{A , (B \land C) \}_\times \seq A * B}{
					\infer[\rrn\ast]{\{A , \{B , C\}_+ \}_\times \seq A * B}{
						\infer[\rn{Ax}]{A \seq A}{}
						&
						\infer[\rn{Ax}]{\{B, C\}_+ \seq B }{}
						}
					}
				&
				\infer[\rrn\ast]{\{A , (B \land C) \}_\times \seq A * C}{
					\infer[\rn{Ax}]{A \seq A}{}
					&
					\infer[\lrn\land]{B\land C \seq C}{
						\infer[\rn{Ax}]{\{B, C\}_+ \seq C}{}
						}
					}
				}
			}
		}
	$}\]
\end{example}
We expect no obvious difficulty in studying focused proof-search with bunches instead of nested multisets; the design choice is simply to reduce the complexity of the argument by pushing all uses of exchange (\rn{E}) to Lemma~\ref{lem:etalbi}, rather than tackle it at the same time as focusing itself. In particular, working without the nested system would mean working with a weaker notion of focusing since the exchange rule must then be permissible during both focused and unfocused phases of reduction.

\section{A Focused System} \label{sec:foc} 
At no point in this section will we refer to bunches, thus the variable $\Delta$, so far reserved for elements of $\B$, is re-appropriated as an alternative to $\Gamma$.

 	\subsection{Polarisation} \label{sec:foc_pol}
 	
 	 Polarity in the focusing principle is determined by the invariance of provability under application of a rule, that is, by the proof rules themselves. One way the distinction between positive and negative connectives is apparent is when their rule behave either \emph{synchronously} or \emph{asynchronously}. For example, the $\rrn\ast$ and $\lrn\wand$ highlight the {synchronous} behaviour of the multiplicative connectives since the structure of the context affects the applicability of the rule. Displaying such a synchronous behaviour on the left makes $\wand$ a negative connective, while having it on the right makes $\ast$ a positive connective. 
 	 
 Another way to characterise the polarity of a connective is the study of the inveribility properties of the corresponding rules.
%
	For example, consider the inverses of the $\lrn\lor$ rule, 
	\[\scalebox{.9}{$
	\infer[\text{$\lrn[1]{\lor^{inv}}$}]{\Gamma\{\phi\} \seq \chi}{\Gamma\{\phi\lor\psi\} \seq \chi }
	\qquad\qquad
	\infer[\text{$\lrn[2]{\lor^{inv}}$}]{\Gamma\{\psi\} \seq \chi}{\Gamma\{\phi\lor\psi\} \seq \chi}
	$}\]
	They are derivable in $\lbi$ with $\cut$ (below -- the left branch being closed using Lemma~\ref{lem:formulaAx}) and therefore admissible in $\lbi$ without $\cut$ (by Lemma~\ref{lem:LBI-cutelim}). 
	\[\scalebox{.9}{$
	\infer[\cut]{\Gamma\{\phi\} \seq \chi}{
		\infer[\rrn\lor]{\phi \seq \phi\lor\psi}{
			\phi \seq \phi
			}
		&
		\Gamma\{\phi\lor\psi\} \seq \chi
		}
	\qquad
	\infer[\cut]{\Gamma\{\psi\} \seq \chi}{
		\infer[\rrn\lor]{\psi \seq \phi\lor\psi}{
			\psi \seq \psi
		}
		&
		\Gamma\{\phi\lor\psi\} \seq \chi
	}
	$}\]
	This means that provability is invariant in general upon application of $\lrn\lor$ since it can always be reverted if needed, as follows
	\[\scalebox{.9}{$
	\infer[\lrn\lor]{\Gamma\{\phi\lor\psi\} \seq \chi}{
		\infer[\text{$\lrn[1]{\lor^{inv}}$}]{\Gamma\{\phi\} \seq \chi}{
			\Gamma\{\phi\lor\psi\} \seq \chi
			}
		&
		\infer[\text{$\lrn[2]{\lor^{inv}}$}]{\Gamma\{\psi\} \seq \chi}{
			\Gamma\{\phi\lor\psi\} \seq \chi
			}
		}
	$}
	\]

Note however that dual connectives do not necessarily have dual behaviours in terms of provability invariance, on the left and on the right. For example, consider all the possible rules for $\land$, of which some qualify as positive and others as positive.
%
	\[\scalebox{.9}{$
	\begin{array}{c@{\quad}c}
		\infer[\text{$\lrn[1]{\land^-}$}]{\Gamma\{\phi\land \psi\} \seq \chi}{\Gamma\{\phi\} \seq \chi}
		\quad 
		\infer[\text{$\lrn[2]{\land^-}$}]{\Gamma\{\phi\land \psi\} \seq \chi}{\Gamma\{\psi\} \seq \chi}
		&
		\infer[\rrn{\land^- }]{\Gamma \seq \phi \land \psi}{\Gamma \seq \phi &\Gamma \seq \psi }
		\\[2ex]
		\infer[\lrn{\land^+}]{\Gamma\{\phi\land\psi\} \seq\chi}{\Gamma\{\{\phi,\psi\}_+\} \seq\chi}
		&
		\infer[\rrn{\land^+}]{\{\Gamma, \Delta \}_+ \seq \phi \land \psi}{\Gamma \seq \phi & \Gamma\{\{\phi\land\psi\}_+\} \seq\chi}
	\end{array}
	$}\]
%

 All of these rules are sound, and replacing the conjunction rules in $\lbi$ with any pair of a left and right rule will result in a sound and complete system. Indeed, the rules are inter-derivable when the structural rules are present, but otherwise they can be paired to form two sets of rules which have essentially different proof-search behaviours. That is, the rules in the top-row make $\land$ negative while the bottom row make $\land$ positive. Each conjunction also comes with an associated unit, that is, $\top^-$ for negative conjunctio and $\top^+$ for positive conjunction.
 We choose to add  all of them to our system in order to have access to those different proof search behaviours at will.
 
  Finally, the polarity of the propositional letters can be assigned arbitrarily as long as only once for each.
 
\begin{definition}[Polarised Syntax]
 	Let $\mathsf{P}^+ \sqcup \mathsf{P}^-$ be a partition of $\mathsf{P}$, and let $A^+ \in \mathsf{P}^+$ and $A^- \in \mathsf{P}^-$, then the polarised formulas are defined by the following grammar,
	\begin{align*}
	P,Q &::=  L \mid P \lor Q \, \, \mid \, P * Q \, \,  \, \, \mid P \land^+ Q \, \mid \top^+ \mid \i \mid \bot  &L ::= \downshift N \mid A^+ \\
	N,M &::= R \mid P \to N \mid P \wand N \mid N \land^- M  \mid \top^-  &R ::=  \upshift P \mid A^-
	\end{align*}
	The set of positive formulas $P$ is denoted $\Formulas^+$; the set of negative formulas $N$ is denoted $\Formulas^-$; and the set of all polarised formulas is denoted $\Formulas^\pm$. The sub-classifications $L$ and $R$ are left-neutral and right-neutral formulas respectfully. 
\end{definition}
The shift operators have no logical meaning; they simply mediate the exchange of polarity, and thus the \emph{shifting} into a new phase of proof-search. Consequently, to reduces cases in subsequent proofs, we will consider formulas of the form $\upshift \downshift N$ and $\downshift \upshift P$, but not $ \downshift \upshift \downshift N$, $\downshift \upshift \downshift \upshift P$, etc. 
\begin{definition}[Depolarisation]
	Let $\circ \in \{\lor\,, *\,, \to\,, \wand\}$, and let $A^+ \in \mathsf{P}^+$ and $A^- \in \mathsf{P}^-$, then the depolarisation function $\floor{\cdot}:\Formulas^\pm \to \Formulas$ is defined as follows: 
	\[\begin{array}{l}
	\floor{A^+} := \floor{A^-} := A 
	\quad 
	\floor{\upshift \phi} := \floor{\downshift \phi} := \floor{\phi} 
	\quad 
	\floor{\bot} := \bot 
	\quad\floor{\i} := \i 
	\\  
	\floor{\top^+} := \floor{\top^-} := \top 
	\quad 
	\floor{\phi \circ \psi} := \floor{\phi} \circ \floor{\psi}  
	\quad
	\floor{\phi \land^+ \psi} := \floor{\phi \land^- \psi} := \floor{\phi} \land \floor{\psi} 
	\end{array}\]
\end{definition}
Since proof-search is controlled by polarity, the construction of sequents in the focused system must be handled carefully to avoid ambiguity.
\begin{definition}[Polarised Sequents]
\emph{Positive} and \emph{neutral} nests, denoted by $\Gamma$ and $\vec{\Gamma}$ resp., are defined according to the following grammars
\[\begin{array}{l@{\ :=\ }l@{\ \mid\ }l@{\qquad}l@{\ :=\ }l@{\qquad}l@{\ :=\ }l@{\quad}}
	 \Gamma& \Sigma & \Pi
	 &
	 \Sigma & P  \mid \{\Pi_1,...,\Pi_n\}_+  
	 &
	 \Pi& P  \mid \{\Sigma_1,...,\Sigma_n\}_\times  
	 \\
	 \vec{\Gamma}&\vec{\Sigma} & \vec{\Pi} 
	 &
	 \vec{\Sigma} & L \mid \{\vec{\Pi}_1,...,\vec{\Pi}_n\}_+  
	 &
	 \vec{\Pi}& L \mid \{\vec{\Sigma}_1,...,\vec{\Sigma}_n\}_\times  
 \end{array}\]
	A pair of a polarised nest and a polarised formula is a \emph{polarised sequent} if it falls into one of the following cases
	$$
	\Gamma \seq N \quad \mid \quad \vec{\Gamma} \seq \foc{P} \quad \mid  \quad \vec{\Gamma}\{\foc{N}\} \seq R
	$$
\end{definition}
The decoration $\foc{\phi}$ indicates that the formula is in focus; that is, it is a positive formula on the right, or a negative formula on the left. Of the three possible cases for well-formed polarised sequents, the first may be called \emph{unfocused}, with the particular case of being \emph{neutral} when of the form $\vec{\Gamma} \seq R$; and the latter two may be called \emph{focused}. 

	 \begin{definition}[Depolarised Nest]
	 	The depolarisation map extends to polarised nests $\floor{\cdot}:\nests^\pm \to \nests$ as follows:
	 	$$\floor{\{\Pi_1,...,\Pi_n\}_+} = \{\floor{\Pi_1},...,\floor{\Pi_n} \}_+ \qquad \floor{\{\Sigma_1,...,\Sigma_n\}_\times} = \{\floor{\Sigma_1},...,\floor{\Sigma_n} \}_\times
	 	$$
	 \end{definition}
	 
\subsection{Focused Calculus} \label{sec:foc_calc}
We may now give the focused system. That is, the operational semantics for focused proof-search in $\lbi$. All the rules, with the exception of $\rn P$ and $\rn N$, are polarised versions of the rules from $\nestify\lbi$.

\focbi

\begin{definition}[System $\fbi$]
	The focused system $\fbi$ is composed of the rules on Figure \ref{fig:fbi}.
\end{definition}

Note the absence of a $\cut$-rule, this is because the above system is intended to encapsulate precisely \emph{focused} proof-search. Below we show that a $\cut$-rule is indeed admissible, but proofs in $\fbi+\cut$ are not necessarily focused themselves. Here the distinction between the methodologies for establishing the focusing principle becomes present since one may show completeness without leaving $\fbi$ by a permutation argument instead of a $\cut$-elimination one. 

The $\rn P$ and $\rn N$ rules will allow us to move a formula from one side to another during the proof of the completeness of $\fbi+\cut$ (Lemma~\ref{lem:compFBIcut}).
The depolarised version are not directly present in $\lbi$, but are derivable in $\lbi$ (Lemma \ref{lem:formulaAx}). However, the way they are focused renders them not provable in $\fbi$ because it forces one to begin with a potentially \emph{bad} choice; for example,  $A \lor B \seq A \lor B$ has no proof beginning with $\rrn\lor$. In practice, they are a feature rather than a bug since they allow one to terminate proof-search early, without unnecessary further expansion of the axiom.
In related works, such as~\cite{Chaudhuri16a,Chaudhuri16b}, the analogous rules are eliminated by initially working with a weaker notion of focused proof-search, and it is reasonable to suppose that the same may be true for BI. We leave this to future investigation.

Note also that, although it is perhaps proof-theoretically displeasing to incorporate weakening into the operational rules as in $\lrn{\wand'}$ and $\rrn{\ast'}$, it has good computational behaviour during focused proof-search since the reduction of $\phi \wand \psi$ can only arise out of an explicit choice made earlier in the computation.

Soundness follows immediately from the depolarisation map; that is, the interpretation of polarised sequents as nested sequents, and hence proofs in $\fbi$ actually are focused proofs in $\nestify\lbi$.

\begin{theorem}[Soundness of $\fbi$]
	Let $\Gamma$ be a polarised nest and $N$ a negative formula.
	If $\vdash_{\fbi} \Gamma \seq N$  then $\vdash_{\nestify\lbi} \floor{\Gamma} \seq \floor{N}$  
\end{theorem}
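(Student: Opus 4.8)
The plan is to prove soundness of $\fbi$ by induction on the structure of the focused derivation, showing that each rule of $\fbi$ maps, under the depolarisation $\floor{\cdot}$, to a derivable inference in $\nestify\lbi$ (equivalently, using Lemma~\ref{lem:etalbi}, in $\lbi$). The statement as given is for the specific case of an unfocused sequent $\Gamma \seq N$, but since $\fbi$ has three shapes of sequent, a direct induction will not close unless I strengthen the induction hypothesis to cover all three. So first I would prove the more general claim simultaneously for the three sequent forms: if $\vdash_{\fbi} \Gamma \seq N$ then $\vdash_{\nestify\lbi} \floor{\Gamma} \seq \floor{N}$; if $\vdash_{\fbi} \vec{\Gamma} \seq \foc{P}$ then $\vdash_{\nestify\lbi} \floor{\vec{\Gamma}} \seq \floor{P}$; and if $\vdash_{\fbi} \vec{\Gamma}\{\foc{N}\} \seq R$ then $\vdash_{\nestify\lbi} \floor{\vec{\Gamma}\{N\}} \seq \floor{R}$. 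In each case the key point is that the focus brackets $\foc{\cdot}$ and the shift operators $\upshift,\downshift$ carry no logical content: by the definition of depolarisation they simply vanish, so a polarised sequent and its depolarisation differ only in this bookkeeping.

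The main work is a case analysis over the last rule applied. For the \textbf{Focused} and \textbf{Unfocused} logical rules, each one is by construction the polarised image of a rule of $\nestify\lbi$, so depolarising premisses and conclusion yields exactly the corresponding $\nestify\lbi$ inference (e.g.\ $\rrn\ast$, $\lrn\wand$, $\lrn\land$, $\lrn\lor$, $\rrn\to$, $\cont$ all match their $\nestify\lbi$ counterparts after erasing polarity annotations, using that $\floor{P \land^+ Q} = \floor{P} \land \floor{Q}$, $\floor{P*Q}=\floor{P}*\floor{Q}$, and so on). The \textbf{Neutral} shift rules $\rrn\upshift,\lrn\upshift,\rrn\downshift,\lrn\downshift$ and the structural $\rn C$ are the easiest: under $\floor{\cdot}$ premiss and conclusion become \emph{identical} nested sequents (since $\floor{\upshift P}=\floor{P}$ and $\floor{\downshift N}=\floor{N}$), so the induction hypothesis applied to the premiss already gives the conclusion with no further inference — for $\rn C$ it maps to an instance of $\cont$, which is the one structural rule retained in $\nestify\lbi$. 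The axiom and unit rules require a little care but no difficulty: $\rn{Ax^+}$ and $\rn{Ax^-}$ depolarise to $\vdash_{\nestify\lbi}\{\floor{\Gamma},A\}_+ \seq A$, which is an instance of $\rn{Ax}$; $\rn P$ and $\rn N$ depolarise to sequents of the form $\floor{\Gamma} \seq \floor{N}$ where premiss and conclusion again coincide after erasing shifts, so they are handled like the shift rules; and $\rrn{\top^+}, \rrn\i, \rrn{\top^-}, \lrn\bot$ map directly to the corresponding $\nestify\lbi$ axioms.

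The one genuinely delicate point, which I expect to be the main obstacle, is that depolarisation must be shown to commute with the sub-nest substitution notation $\vec{\Gamma}\{\cdot\}$. Rules such as $\lrn\to$, $\lrn\wand$, $\lrn[i]{\land^-}$, and the unfocused left rules are stated using a context hole $\vec{\Gamma}\{\cdot\}$, and I need that $\floor{\vec{\Gamma}\{\Lambda\}} = \floor{\vec{\Gamma}}\{\floor{\Lambda}\}$ so that depolarising the conclusion really does produce the nested sequent to which the $\nestify\lbi$ rule applies. This is subtle precisely because substitution into a nest may trigger a \emph{promotion} (collapsing a singleton or merging two constructors of the same kind, as in Example~\ref{ex:nests}), and I must check that $\floor{\cdot}$, being defined homomorphically on the two constructors, respects these promotions — that is, depolarisation neither creates nor destroys the structural anomalies that trigger promotion, since it preserves the kind $(+ \text{ or } \times)$ of every multiset and maps units to units. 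Once this commutation lemma is established, the rest of the induction is routine bookkeeping, and the theorem follows by specialising the strengthened statement to the unfocused case $\Gamma \seq N$, optionally composing with the soundness direction of Lemma~\ref{lem:etalbi} to land in $\lbi$ itself.
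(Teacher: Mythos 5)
Your overall strategy --- induction on the $\fbi$-derivation with the hypothesis strengthened to the three sequent forms, observing that every logical rule depolarises to its $\nestify\lbi$ counterpart and that the shift rules become identities under $\floor{\cdot}$ --- is essentially the paper's argument, and your extra concern about $\floor{\cdot}$ commuting with the context notation $\vec{\Gamma}\{\cdot\}$ and with promotion is legitimate (the paper passes over it silently). However, your treatment of the $\rn P$ and $\rn N$ rules contains a genuine gap, not just a slip of wording. You write that $\rn P$ and $\rn N$ ``depolarise to sequents \ldots where premiss and conclusion again coincide after erasing shifts, so they are handled like the shift rules.'' But $\rn P$ and $\rn N$ have \emph{no premisses}: they are zero-premise axioms, $\{\vec\Gamma,\downshift\upshift P\}_+ \seq \foc{P}$ and $\{\vec\Gamma,\foc{N}\}_+ \seq \upshift\downshift N$. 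There is no premiss to which the induction hypothesis could be applied, so ``handled like the shift rules'' produces nothing in these cases.

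What these axioms actually depolarise to is a generalised identity sequent $\{\floor{\vec\Gamma},\floor{P}\}_+ \seq \floor{P}$ (resp.\ with $\floor{N}$), where $\floor{P}$ is an arbitrary formula. This is \emph{not} an instance of $\rn{Ax}$ in $\nestify\lbi$, whose axiom is restricted to propositional letters, so the sequent must be \emph{derived}: one needs Lemma~\ref{lem:formulaAx} ($\vdash_{\lbi}\phi\seq\phi$ for every formula $\phi$, proved by induction on the size of $\phi$), combined with admissibility of weakening to absorb the surrounding context $\floor{\vec\Gamma}$, and transported into $\nestify\lbi$ via Lemma~\ref{lem:etalbi}. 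This is exactly how the paper discharges $\rn P$ and $\rn N$, and it is the one case of the induction that requires a substantive external lemma rather than mere bookkeeping; without it your induction does not close.
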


\begin{proof}
	Every rule in $\fbi$ except the shift rules, as well as the $\rn P$ and $\rn N$ axioms, become a rule in $\nestify \bi$ when the antecedent(s) and consequent are depolarised. Instance of the shift rule can be ignored since the depolarised versions of the consequent and antecedents are the same. Finally, the depolarised versions of $\rn P$ and $\rn N$ follow from Lemma \ref{lem:formulaAx} with the use of some weakening.\qed \end{proof}

\begin{example}
	Consider the following proof in $\fbi$, we suppose here that propositional letters $A$ and $C$ are negative, but $B$ is positive.
\[\scalebox{.9}{$
		\infer[\rrn\wand]{\emptyset_\times \seq  (\downshift A * \downshift(\upshift B\land^- C)) \wand (\upshift (\downshift A * B)\land^- \upshift (\downshift A * \downshift C))}{
			\infer[\lrn\ast]{\downshift A * \downshift(\upshift B\land^- C) \seq \upshift (\downshift A * B)\land^- \upshift (\downshift A * \downshift C)}{
				\infer[\rrn{\land^-}]{\{ \downshift A , \downshift(\upshift B\land^- C) \}_\times \seq \upshift (\downshift A * B)\land^- \upshift (\downshift A * \downshift C)}{
					\infer[\lrn\upshift\ {\color{blue}(1)}]{\{ \downshift A , \downshift(\upshift B\land^- C) \}_\times \seq \upshift (\downshift A * B)}{
						\infer[\text{$\lrn[1]{\land^-}$}]{\{ \downshift A , \foc{\upshift B\land^- C} \}_\times \seq \upshift (\downshift A * B)}{
							\infer[\lrn\upshift]{\{ \downshift A , \foc{\upshift B}\}_\times  \seq \upshift (\downshift A * B)}{
								\infer[\rrn\upshift]{\{ \downshift A , \ B\}_\times  \seq \upshift (\downshift A * B) }{
									\infer[\rrn\ast]{\{ \downshift A ,  B\}_\times  \seq \foc{\downshift A * B} }{
										\infer[\rrn\downshift]{\downshift A  \seq \foc{\downshift A }}{
											\infer[\lrn\downshift]{\downshift A  \seq A }{
												\infer[\rn{Ax^-}]{\foc{A}  \seq A }{}
											}
										}
										&
										\infer[\rn{Ax^+}]{B \seq \foc{B}}{}
										}
									}
								}
							}
						}
					&
					\infer[\rrn\upshift\ {\color{blue}(2)}]{\{ \downshift A , \downshift(\upshift B\land^- C) \}_\times \seq \upshift (\downshift A * \downshift C)}{
						\infer[\rrn\ast]{\{ \downshift A , \downshift(\upshift B\land^- C) \}_\times \seq \foc{\downshift A * \downshift C}}{
							\infer[\rrn\downshift]{\downshift A  \seq \foc{\downshift A }}{
								\infer[\lrn\downshift]{\downshift A  \seq A }{
									\infer[\rn{Ax^-}]{\foc{A}  \seq A }{}
									}
								}
							&
							\infer[\rrn\downshift]{\downshift(\upshift B\land^- C) \}_\times \seq \foc{\downshift C }}{
								\infer[\lrn\downshift]{\downshift(\upshift B\land^- C) \seq C}{
									\infer[\text{$\lrn[2]{\land^-}$}]{ \foc{\upshift B\land^- C}  \seq C}{
										\infer[\rn{Ax^-}]{ \foc{C} \seq C }{}
										}
									}
								}
							}
						}
					}
				}
			}
$}\]

	It is a focused version of the proof given in Example~\ref{ex:nestedproof}. Observe that the only non-deterministic choices are which formula to focus on, such as in steps (1) and (2), where different choices have been made for the sake of demonstration. The point of focusing is that \emph{only} at such points do choices that affect termination occur.  The assignment of polarity to the propositional letters is what forced the shape of the proof; for example, if $B$ had been negative the above would not have been well-formed. This phenomenon is standarly observed in focused systems (e.g.~\cite{Chaudhuri06}).
\end{example}

We now introduce the tool which will allow us to show that if there is a proof of a sequent (\emph{a priori} unstructured), then there is necessarily a focused one.
\begin{definition}
	All instances of the following rule where the sequents are well-formed are instances of $\cut$, where $\vec{\phi}$ denotes that $\phi$ is possibly prenexed with an additional shift
	\[
	\infer[\cut]{\Gamma\{\Delta\} \seq \chi}{\Delta \seq \phi & \Gamma \{\vec{\phi}\} \seq \chi}
	\]
\end{definition}
Admissibility follows from the usual argument, but within the focused system; that is, through the upward permutation of cuts until they are eliminated in the axioms or are reduced in some other measure.
\begin{definition}[Good and Bad Cuts]
	Let $\mathcal{D}$ be a $\fbi+\cut$ proof, a cut is a quadruple $\langle \mathcal{L}, \mathcal{R}, \mathcal{C}, \phi \rangle$ where $\mathcal{L}$ and $\mathcal{R}$ are the premises to a $\cut$ rule, concluding $\mathcal{C}$ in $\D$, and $\phi$ is the $\cut$-formula. They are classified as follows:
	\begin{enumerate}
		\item[] \textbf{Good} - If $\phi$ is principal in both $\mathcal{L}$ and $\mathcal{R}$.
		\item[] \textbf{Bad} - If $\phi$ is not principal in one of $\mathcal{L}$ and $\mathcal{R}$.
		\begin{enumerate}
			\item[] Type 1: If $\phi$ is not principal in $\mathcal{L}$.
			\item[] Type 2: If $\phi$ is not principal in $\mathcal{R}$.
		\end{enumerate}
	\end{enumerate}
\end{definition}
\begin{definition}[Cut Ordering]
	The $\cut$-rank of a cut $\langle \mathcal{L}, \mathcal{R}, \mathcal{C}, \phi \rangle$ in a proof is the triple $\langle \cut$-complexity, $\cut$-duplicity, $\cut$-level$\rangle$, where the $\cut$-complexity is the size of $\phi$, the $\cut$-duplicity is the number of contraction instances above the cut, the $\cut$-level is the sum of the heights of the sub-proofs concluding $\mathcal{L}$ and $\mathcal{R}$. 
	
	Let $\D$ and $\D'$ be two $\fbi+\cut$ proofs, let $\sigma$ and $\sigma'$ denote their multiset of cuts respectively. Proofs are ordered by $\D \prec\D' \iff \sigma < \sigma'$, where $<$ is the multiset ordering derived from the lexicographic ordering on $\cut$-rank.
\end{definition} 	
It follows from a result in \cite{Dershowitz79} that the ordering on proofs is a well-order, since the ordering on cuts is a well-order. 
\begin{lemma}[Good Cuts Elimination]\label{lem:goodcutelim}
	Let $\D$ be a $\fbi+\cut$ proof of $S$; there is a $\fbi+\cut$ proof $\D'$ of $S$ containing no good cuts such that $\D' \preceq \D$.
\end{lemma}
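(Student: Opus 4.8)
The plan is to treat this as the familiar \emph{principal-case} part of a cut-elimination argument, recast inside $\fbi+\cut$ and driven by the well-order $\prec$. For a good cut $\langle\mathcal{L},\mathcal{R},\mathcal{C},\phi\rangle$ the cut-formula $\phi$ is principal in both premises, so $\mathcal{L}$ and $\mathcal{R}$ each end with the rule introducing the outermost connective (or shift, or axiom) of $\phi$. First I would give, for every such matching pair, a local rewrite of the proof by the standard key reductions: for the additive connectives $\land^-$ and $\lor$ the two-premise rule's irrelevant branch is discarded and a single cut on the selected subformula remains; for $\land^+$ and $*$ the cut splits into cuts on the two conjuncts (against $\lrn{\land^+}$ and $\lrn\ast$); for $\to$ and $\wand$ it splits into cuts on the antecedent and the succedent; and for the shift pairs $\rrn\upshift/\lrn\upshift$ and $\rrn\downshift/\lrn\downshift$ it reduces to a cut on the body. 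When either premise is a leaf ($\rn{Ax^+},\rn{Ax^-},\lrn\bot,\rrn{\top^+},\rrn\i,\rn P,\rn N,\dots$) the cut is simply deleted, retaining the other premise with at most some weakening. In every remaining case the cuts that survive are on \emph{immediate subformulas} of $\phi$.

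Second I would verify that each such rewrite strictly decreases the proof in $\prec$. The rewrite removes from the cut-multiset $\sigma$ a single element of cut-complexity $n=|\phi|$ and introduces only cuts of cut-complexity $<n$; moreover it recombines the immediate sub-derivations of $\mathcal{L}$ and $\mathcal{R}$ without ever duplicating a sub-derivation that carries a cut (the additive cases \emph{discard} a branch, the others recombine linearly). The only structural material introduced is the contraction $\rn C$ needed in the $\to,\wand$ cases to merge the shared context $\vec\Delta$ copied by $\lrn\to$ and $\lrn\wand$; this contraction lies \emph{below} the newly created cuts, so it raises no other cut's cut-duplicity, and every cut outside the rewritten region keeps its sub-derivations and hence its cut-rank unchanged. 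Since cut-complexity is the most significant entry of the lexicographic cut-rank, removing one element and adding finitely many strictly smaller ones gives $\sigma' < \sigma$ in the induced multiset order; thus a single rewrite yields a proof $\mathcal{D}' \prec \mathcal{D}$.

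Finally I would iterate: while a good cut is present, rewrite one of them. Each step strictly decreases the proof in $\prec$, which is a well-order, so the procedure terminates; a terminal proof has no good cut (else a rewrite would still apply) and satisfies $\mathcal{D}' \preceq \mathcal{D}$, with equality exactly when $\mathcal{D}$ is already good-cut-free. I would also remark that a local rewrite can only change the \emph{status} (good versus bad) of the cuts it touches, and that this labelling is not part of the cut-rank, so it plays no role in the termination measure; this is why iterating on good cuts alone is well-defined even though bad cuts are allowed to remain.

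The main obstacle I expect is the polarity bookkeeping rather than the termination argument. Concretely one must check, case by case, that each key reduction \textbf{(i)} preserves well-formedness of polarised sequents---so that the focus brackets $\foc{\cdot}$ and the shift-decorated form $\vec\phi$ of the cut-formula line up and the residual smaller cuts are again legal instances of $\cut$---and \textbf{(ii)} really avoids duplicating cut-bearing sub-derivations even for the weakening-laden multiplicative rules $\rrn\ast$ and $\lrn\wand$, where the extra context ($\vec\Gamma''$, resp.\ $\vec\Delta''$) must be carried along purely by weakening. The shift cases are where this is most delicate, since one must confirm that a cut whose formula is $\upshift P$ or $\downshift N$ is genuinely good only when $\mathcal{L}$ and $\mathcal{R}$ end with the matching $\upshift$/$\downshift$ rules, exposing a cut on the strictly smaller body. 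These checks are routine but numerous, and it is there that an error would most plausibly hide.
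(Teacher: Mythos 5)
Your proposal is correct in outline and takes essentially the same route as the paper: each good cut is removed by a local principal-case reduction that creates only cuts on proper subformulas (hence of strictly smaller $\cut$-complexity), leftover context is absorbed by the \emph{admissibility} of weakening rather than by explicit rules (the paper's double-line inferences), and the process terminates because $\prec$ is a well-order whose dominant component is $\cut$-complexity.

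Two of your supporting claims, however, do not match the system and are worth correcting. First, contraction is needed only in the $\to$ case: $\lrn\to$ genuinely shares $\vec{\Delta}$ between its two premises, but $\lrn\wand$ splits its context multiplicatively --- $\vec{\Delta}$ occurs only in its left premise and $\vec{\Delta}'$ only in its right --- so the paper's displayed $\wand$ reduction closes with admissible weakening alone (absorbing the discarded $\vec{\Delta}''$) and introduces no $\rn C$ at all. Second, your claim that ``every cut outside the rewritten region keeps its sub-derivations and hence its cut-rank unchanged'' is too strong: a cut sitting \emph{below} the rewritten one has the rewritten region inside its premise sub-derivations, so its $\cut$-level can change, and any contraction you do introduce (in the $\to$ case) lies above such a cut and raises its $\cut$-duplicity. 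The paper makes only the weaker observation that other cuts may grow in duplicity or level but never in complexity, and rests the multiset decrease on that fact. Your termination argument survives once restated in this weaker form, so these are repairable slips of precisely the kind you predicted would hide in the ``routine but numerous'' checks, rather than a wrong approach.
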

\begin{proof}
	Let $\D$ be as in hypothesis, if it contains no good cuts then $\D =\D'$ gives the desired proof. Otherwise, there is at least one good cut $\langle \mathcal{L}, \mathcal{R}, \mathcal{C}, \phi \rangle$. Let $\partial$ be the sub-proof in $\D$ concluding $\mathcal{C}$, then there is a transformation $\partial \mapsto \partial'$ where $\partial'$ is a $\fbi+\cut$ proof of $S$ with $\partial' \prec \partial$ such that the multiset of good cuts in $\partial'$ is smaller (with respect to $\prec$) than the multiset of good cuts in $\partial$.  Since $\prec$ is a well-order indefinitely replacing $\partial$ with $\partial'$ in $\D$ for various cuts yields the desired $\D'$. 
	
	The key step is that a cut of a certain $\cut$-complexity is replaced by cuts of lower $\cut$-complexity, possibly increasing the $\cut$-duplicity or $\cut$-level of other cuts in the proof, but not modifying their complexity.
	
	\scalebox{.85}{$
	\infer[\cut]{\vec{\Gamma}\{\{\vec{\Gamma}' , A^+\}_+\} \seq \foc{A^+}}{
		\infer[\rn{Ax^+}]{ \{\vec{\Gamma}', A^+\}_+ \seq \foc{A^+} }{}
		&
		 \vec{\Gamma}\{A^+\} \seq \foc{A^+} 
		}
	$}
	$\quad\mapsto\quad$
	\scalebox{.85}{$
	\infer=[\weak]{\vec{\Gamma}\{\{\vec{\Gamma}' , A^+\}_+\} \seq \foc{A^+}}{\vec{\Gamma}\{A^+\} \seq \foc{A^+} }
	$}\\
	
	
	\scalebox{.85}{$
	\infer[\cut]{\vec\Gamma\{\vec{\Delta},\vec{\Delta}',\{\vec{\Delta}'',\vec{\Delta}'''\}_+\}_\times \seq R}{
		\infer[\rrn\wand]{\vec{\Delta}''' \seq P\wand N}{
			\{\vec{\Delta}''', P\}_\times \seq  N
		}
		&
		\infer[\lrn\wand]{\vec\Gamma\{\vec{\Delta},\vec{\Delta}',\{\vec{\Delta}'',\foc{P\wand N}\}_+\}_\times \seq R}{
			\vec{\Delta} \seq \foc{P}
			&
			\Gamma\{\vec{\Delta}', \foc{N}\}_\times \seq R
			}
		}
	$}
	
	\hspace*{2.5cm}
	$\mapsto\quad$
	\scalebox{.85}{$
	\infer=[\weak]{\vec\Gamma\{\vec{\Delta},\vec{\Delta}',\{\vec{\Delta}'',\vec{\Delta}'''\}_+\}_\times \seq R}{
	\infer[\cut]{\vec\Gamma\{\vec{\Delta},\vec{\Delta}',\vec{\Delta}''\}_\times \seq R}{
		\vec{\Delta} \seq \foc{P} 
		&
		\infer[\cut]{\vec\Gamma\{\vec{\Delta},\vec{\Delta}'',P\}_\times \seq R}{
			\{\vec{\Delta}'', P\}_\times \seq  N
			&
			\vec\Gamma\{\vec{\Delta}', \foc{N}\}_\times \seq R
			}
		}
	}
	$}\\

We denote by a double-line the fact that we do not actually use a weakening, but only the fact that it is admissible in $\fbi$ by construction (Lemma~\ref{lem:weakelim}).
\qed
\end{proof}

\begin{lemma}[Bad Cuts Elimination]\label{lem:badcutelim}
	Let $\D$ be a $\fbi+\cut$ proof of $S$ that contains only one cut which is bad, then there is a $\fbi+\cut$ proof $\D'$ of $S$ such that $\D' \prec \D$.
\end{lemma}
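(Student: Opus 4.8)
The plan is to prove Lemma~\ref{lem:badcutelim} by a single \emph{rank-reducing} permutation of the unique bad cut $\langle \mathcal{L},\mathcal{R},\mathcal{C},\phi\rangle$, pushing it one step toward the leaves past the inference in which $\phi$ is not principal. Since $\D$ contains exactly one cut, to obtain $\D'\prec\D$ it suffices that the transformation replaces this cut by finitely many cuts, each of strictly smaller $\cut$-rank (possibly none): because $\D$'s cut-multiset is a singleton, the multiset order $\prec$ then gives $\D'\prec\D$ at once. I would split on the type of badness. For a Type~1 cut I permute upward through the last rule of $\mathcal{L}$, which (as $\phi$ is non-principal on the right of $\mathcal{L}$, whose succedent is the single formula $\phi$) must be a left-rule acting inside $\Delta$ or a structural rule; for a Type~2 cut I permute upward through the last rule of $\mathcal{R}$, which does not act on the tracked occurrence $\vec\phi$. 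The two directions are symmetric, so I would treat Type~2 in detail and remark that Type~1 is dual by deep inference inside the nest.

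For the generic Type~2 step I would inspect the last rule $r$ of $\mathcal{R}$. If $r$ is single-premise and leaves $\vec\phi$ untouched—including the neutral shift rules $\rrn\upshift,\lrn\upshift,\rrn\downshift,\lrn\downshift$, which change only the focus annotation and not the underlying nest—then the cut permutes above the premise of $r$, yielding one cut of the same $\cut$-complexity, with no more contractions above it (so $\cut$-duplicity does not increase) and strictly smaller $\cut$-level, since the right subproof has lost its final inference; hence the new cut is $\prec$ the original. If $r$ is branching with $\vec\phi$ lying in a context common to both premises, such as $\rrn{\land^-}$ or $\lrn\lor$, the cut is copied into each branch; each copy again has equal complexity, no more contractions above it, and strictly smaller level, so each is $\prec$ the original. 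The boundary cases are those where $\vec\phi$ is created without a premise: in the axioms $\rn{Ax^+}$, $\rn{Ax^-}$, $\rn P$, $\rn N$, $\lrn\bot$, $\rrn{\top^+}$, $\rrn{\top^-}$ the occurrence sits in a side-context the axiom ignores, and in the weakening-carrying rules $\lrn{\wand'}$ and $\rrn{\ast'}$ it sits in the weakened part; in all of these I discard the cut and re-derive the conclusion directly, substituting $\Delta$ via the admissible weakening of Lemma~\ref{lem:weakelim} (marked by the double line), so that no cut remains.

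The one genuinely delicate case, which I expect to be the main obstacle, is when $r$ is the contraction rule $\rn C$ and the contracted sub-nest $\vec\Delta$ contains the tracked occurrence $\vec\phi$. Then the premise of $\rn C$ carries two occurrences of $\vec\phi$, and permuting the cut above $\rn C$ forces one to cut $\mathcal{L}$ against \emph{both} copies and to reinstate a single contraction below; this duplicates $\mathcal{L}$ and may raise the $\cut$-level. This is precisely the configuration that the $\cut$-duplicity component of the $\cut$-rank exists to dominate: permuting past $\rn C$ removes a contraction instance from above the cut, and because duplicity precedes level in the lexicographic order, the resulting cuts should still come out $\prec$ the original. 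Pinning down this accounting—verifying that the duplication of $\mathcal{L}$ does not spoil the strict decrease, and choosing the order of the two resulting cuts so that the measure genuinely drops—is the crux of the argument and the sole reason duplicity is tracked at all.

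Finally, throughout the case analysis I would check that every sequent produced by a permutation is still a well-formed polarised sequent and that the permuted cut remains an instance of the $\cut$ rule: this reduces to carrying the focus brackets $\foc{\cdot}$ and the shift annotation of $\vec\phi$ correctly through each step, which is routine once the shift rules are handled transparently as above. Collecting the cases, the unique bad cut is in every instance replaced by cuts of strictly smaller $\cut$-rank, whence $\D'\prec\D$.
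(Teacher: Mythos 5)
Your proposal follows essentially the same route as the paper's proof: permute the unique bad cut upward, dismissing the axiom cases via the admissible weakening of Lemma~\ref{lem:weakelim}, letting the commutative cases strictly decrease the $\cut$-level at fixed $\cut$-complexity, and handling contraction by trading a possible increase in $\cut$-level for a strict decrease in $\cut$-duplicity, which dominates it lexicographically. The paper's own treatment of the contraction case is exactly the transformation you describe (cutting against both copies and reinstating a single contraction below) and is asserted at the same level of detail, so your flagging of the duplicity accounting as the crux matches, and is if anything more candid than, the published argument.
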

\begin{proof}
	 Without loss of generality suppose the cut is the last inference in the proof, then it may be replaced by other cuts whose $\cut$-level or $\cut$-duplicity is smaller, but with same $\cut$-complexity.
	
	First we consider bad cuts when $\mathcal{L}$ and $\mathcal{R}$ are both axioms. There are no Type 1 bad cuts on axioms as the formula is always principal, meanwhile the Type $2$ bad cuts can trivially be permuted upwards or ignored; for example,\\
	
	\scalebox{.85}{
	$
	\infer[\cut]{\vec\Gamma\{\vec\Delta,\vec\Delta',\{\vec\Delta'', \vec\Delta''', A_+,\foc{P\wand N}\}_+\}_\times \seq R}{
		\infer[\rn{Ax^+}]{\{\vec\Delta''',A_+\}_+ \seq \foc{A_+}}{}
		&
		\infer[\lrn\wand]{\vec\Gamma\{\vec\Delta,\vec\Delta',\{\vec\Delta'', A_+,\foc{P\wand N}\}_+\}_\times \seq R}{
			\vec\Delta \seq \foc P
			&
			\vec\Gamma\{\vec\Delta', \foc N\}_\times \seq R
			}
		}
	$
	}\\
	
	\hspace*{2.5cm}
	$\mapsto\quad$
	\scalebox{.85}{
	$
	\infer=[\weak]{\vec\Gamma\{\vec\Delta,\vec\Delta',\{\vec\Delta'', \vec\Delta''', A_+,\foc{P\wand N}\}_+\}_\times \seq R}{
	\infer[\lrn\wand]{\vec\Gamma\{\vec\Delta,\vec\Delta',\{\vec\Delta'', A_+,\foc{P\wand N}\}_+\}_\times \seq R}{
		\vec\Delta \seq \foc P
		&
		\vec\Gamma\{\vec\Delta', \foc N\}_\times \seq R
	}
}
	$
	}

	Here again we are using an appropriate version of Lemma~\ref{lem:weakelim}.
%
	
	For the remaining cases the cuts are commutative in the sense that they may be permuted upward thereby reducing the $\cut$-level. An example is given below.\\
	
	\scalebox{.85}{
	$
	\infer[\cut]{\vec\Gamma\{\vec\Delta\{\foc{N_1 \land^- N_2}\}\} \seq R}{
		\infer[\text{$\lrn[1]{\land^-}$}]{\vec\Delta\{\foc{N_1 \land^- N_2}\} \seq M}{
			\vec\Delta\{\foc{N_1}\} \seq M
			}
		&
		\vec\Gamma\{  M\} \seq R
		}
	$
	}
%
	$\mapsto\quad$
	\scalebox{.85}{
	$
	\infer[\text{$\lrn[1]{\land^-}$}]{\Gamma\{\Delta\{\foc{N_1 \land^- N_2}\}\} \seq R}{
		\infer[\cut]{\Gamma\{\Delta\{\foc{N_1}\}\} \seq R}{
			\Delta\{\foc{N_1}\} \seq M
			&
			\Gamma\{  M\} \seq R
			}
		}
	$
	}

%

	The exceptional case is the interaction with contraction where the cut is replaced by cuts of possibly equal $\cut$-level, but $\cut$-duplicity decreases.\\
	
	\scalebox{.85}{
	$
	\infer[\cut]{\vec\Gamma\{\vec{\Delta}\{\vec\Delta'\}\} \seq R}{
		\vec\Delta' \seq \foc{L}
		&
		\infer[\cont]{\vec\Gamma\{\vec{\Delta}\{L\}\} \seq R}{
			\vec\Gamma\{\{\vec{\Delta}\{L\}, \vec{\Delta}\{L\}\}_+\} \seq R 
			}
		}
	$
	}

	\hspace*{2.5cm}
	$\quad\mapsto\quad$
	\scalebox{.85}{
	$
	\infer[\cont]{\vec\Gamma\{\vec{\Delta}\{\vec\Delta'\}\} \seq R}{
		\infer[\cut]{\vec\Gamma\{\{\vec{\Delta}\{\vec\Delta'\}, \vec{\Delta}\{\vec\Delta'\}\}_+\} \seq R }{
			\vec\Delta' \seq \foc L
			&
			\infer[\cut]{\vec\Gamma\{\{\vec{\Delta}\{\vec\Delta'\}, \vec{\Delta}\{L\}\}_+\} \seq R}{
				\vec\Delta' \seq \foc L
				&
				\vec\Gamma\{\{\vec{\Delta}\{L\}, \vec{\Delta}\{L\}\}_+\} \seq R
				}
			}
		}
	$
	}
%
%
\qed\end{proof}

\begin{theorem}[Cut-elimination in $\fbi$]\label{lem:fbicutadmi}
	Let $\Gamma$ be a positive nest and $N$ a negative formula. Then, 
	$\vdash_{\fbi} \Gamma \seq N$ if and only if $ \vdash_{\fbi+\cut} \Gamma \seq N$.
\end{theorem}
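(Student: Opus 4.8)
The forward direction is immediate: every rule of $\fbi$ is a rule of $\fbi+\cut$, so any $\fbi$-proof already witnesses $\vdash_{\fbi+\cut}\Gamma\seq N$. All the work lies in the converse, where the plan is to show that every $\fbi+\cut$-proof can be transformed into a $\cut$-free one of the same conclusion. I would carry this out as a reductive cut-elimination argument performed entirely inside $\fbi+\cut$ and driven by the ordering $\prec$ on proofs, which is a well-order by the cited result of \cite{Dershowitz79} (the lexicographic $\cut$-rank being a well-order and $\prec$ its derived multiset order). The two preceding lemmas are exactly the two halves of the reduction, and the theorem is their combination.

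The argument proceeds by well-founded induction on $\prec$. Let $\D$ be a $\fbi+\cut$-proof of $\Gamma\seq N$. If $\D$ contains no $\cut$ it is already an $\fbi$-proof and we are done. Otherwise I would first apply Lemma~\ref{lem:goodcutelim} to obtain a proof $\D_1\preceq\D$ of the same sequent containing no good cuts. If $\D_1$ is $\cut$-free we are done; if not, every remaining $\cut$ in $\D_1$ is bad. I then isolate a single bad cut: choosing a topmost $\cut$ in $\D_1$ and letting $\partial$ be the subderivation it concludes, $\partial$ contains exactly one $\cut$, which is bad, so Lemma~\ref{lem:badcutelim} yields $\partial'\prec\partial$ with the same end-sequent. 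Substituting $\partial'$ for $\partial$ in $\D_1$ produces a proof $\D'$ of $\Gamma\seq N$ with $\D'\prec\D_1\preceq\D$, hence $\D'\prec\D$, and the induction hypothesis supplies a $\cut$-free $\fbi$-proof of $\Gamma\seq N$. This interleaving is what makes the two lemmas fit together: the Good Cuts step is only non-strict ($\preceq$) and so cannot by itself drive the induction, but it guarantees that the subsequent, strictly decreasing Bad Cuts step always applies.

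The step I expect to be the main obstacle is verifying that the local replacement $\partial\mapsto\partial'$ yields a genuinely $\prec$-smaller whole proof $\D'\prec\D_1$. Because the bad-cut reductions permute a $\cut$ upward and, in the contraction case, duplicate a subderivation, replacing $\partial$ by $\partial'$ can alter the $\cut$-level and $\cut$-duplicity of the cuts lying below it in $\D_1$, although their $\cut$-complexity is untouched. The delicate point is therefore to check that the multiset ordering absorbs these side-effects: each $\cut$ introduced in $\partial'$, and each surrounding $\cut$ whose rank is perturbed, must be dominated in the lexicographic $\cut$-rank by the single bad $\cut$ removed from $\partial$. This is precisely the bookkeeping already exhibited for the representative reductions in the proofs of Lemmas~\ref{lem:goodcutelim} and~\ref{lem:badcutelim}, where $\cut$-complexity strictly drops, or is preserved while $\cut$-duplicity or $\cut$-level drops; it is also the reason the three-component, complexity-first $\cut$-rank was chosen, and I would close the argument by recording that the required global decrease follows from the Dershowitz--Manna characterisation of the multiset order.
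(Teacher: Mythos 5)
Your proposal is correct and follows essentially the same route as the paper: the forward direction is immediate, and the converse is a well-founded induction on $\prec$ that first removes good cuts via Lemma~\ref{lem:goodcutelim}, then applies Lemma~\ref{lem:badcutelim} to a topmost (hence isolated) bad cut, and closes by the induction hypothesis. The only minor difference is that the paper invokes the induction hypothesis on the reduced sub-proof $\partial'$ to get a $\cut$-free proof of the sub-sequent before substituting back into $\D$, whereas you substitute $\partial'$ directly and apply the induction hypothesis to the resulting whole proof; the bookkeeping concern you flag (substitution possibly perturbing the duplicity and level of cuts lying below) applies equally to both formulations, and the paper in fact passes over it silently rather than discharging it.
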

\begin{proof}
	($\seq$) Trivial as any $\fbi$-proof is a $\fbi+\cut$-proof. ($\Leftarrow$) Let $\D$ be a $\fbi+\cut$-proof of $\Gamma \seq N$, if it has no cuts then it is a $\fbi$-proof so we are done. Otherwise, there is at least one $\cut$, and we proceed by well-founded induction on the ordering of proofs and sub-proofs of $\D$ with respect to $\prec$.
	
 		\textbf{Base Case.} Assume $\D$ is minimal with respect to $\prec$ with at least one cut; without loss of generality, by Lemma \ref{lem:goodcutelim}, assume the cut is bad. It follows from  Lemma \ref{lem:badcutelim} that there is a proof strictly smaller in $\prec$-ordering, but this proof must be $\cut$-free as $\D$ is minimal.

 		\textbf{Inductive Step.} Let $\D$ be as in the hypothesis, then by Lemma \ref{lem:goodcutelim} there is a proof $\partial$ of $\Gamma \seq N$ containing no good cuts such that $\D' \preceq \D$. Either $\D'$ is $\cut$-free and we are done, or it contains bad cuts. Consider the topmost cut, and denote the sub-proof by $\partial$, it follows from Lemma \ref{lem:badcutelim} that there is a proof $\partial'$ of the same sequent such that $\partial' \prec \partial$. Hence, by inductive hypothesis, there is a $\cut$-free proof the sequent and replacing $\partial$ by this proof in $\D$ gives a proof of $\Gamma \seq \phi$ strictly smaller in $\prec$-ordering, thus by inductive hypothesis there is a $\cut$-free proof as required.
\qed \end{proof}

\subsection{Completeness of $\fbi$} \label{sec:foc_comp}
The completeness theorem of the focused system, the operational semantics, is with respect to an interpretation (i.e. a polarisation). Indeed, any polarisation may be considered; for example, both $(\downshift A^-*B^+) \land^+ \downshift A^-$ and $\downshift (A^+*\downshift B^-) \land^+ A^+$ are correct polarised versions of the formulas $(A*B) \land A$. Taking arbitrary $\phi$ the process is as follows: first, fix a polarised syntax (i.e. a partition of the propositional letters into positive and negative sets), then assign a polarity to $\phi$ with the following steps:
\begin{itemize}
	\item If $\phi$ is a propositional atom, it must be polarised by default; 
	\item If $\phi = \top$, then \emph{choose} polarisation $\top^+$ or $\top^-$;
	\item If $\phi = \psi_1 \land \psi_2$, first polarise $\psi_1$ and $\psi_2$, then \emph{choose} an additive conjunction and combine accordingly, using shifts to ensure the formula is well-formed;
	\item If $\phi = \psi_1 \circ \psi_2$ where $\circ \in\{*, \wand, \to, \lor\}$, then polarise $\psi_1$ and $\psi_2$ and combine with $\circ$ accordingly, using shifts where necessary.
\end{itemize}
\begin{example}
	Suppose $A$ is negative and $B$ is positive, then $(A*B) \land A$ may be polarised by choosing the additive conjunction to be positive resulting in $(\downshift A*B) \land^+ \downshift A$ (when $\downshift (A*\downshift B) \land^+ A)$ would not be well-formed). Choosing to shift one can ascribe a negative polarisation  $\upshift ((\downshift A*B) \land^+ \downshift A)$.
\end{example}
The above generates the set of all such polarised formulas when all possible choices are explored. The free assignment of polarity to formulas means several distinct focusing procedures are captured by the completeness theorem. 

\begin{lemma}[Completeness of $\fbi+\cut$] \label{lem:compFBIcut}
		For any unfocused sequent $\Gamma \seq N$, if $\vdash_{\nestify \lbi}\floor{\Gamma \seq N}$ then $\vdash_{\fbi+\cut} \Gamma \seq N$.
\end{lemma}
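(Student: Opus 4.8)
The plan is to induct on the given $\eta\lbi$-derivation of $\floor{\Gamma} \seq \floor{N}$, converting its last rule into a (generally compound) $\fbi+\cut$ inference concluding the polarised sequent $\Gamma \seq N$. Since depolarisation is surjective but forgets shifts and collapses $\land^+/\land^-$ and $\top^+/\top^-$, every connective occurrence acted on by the $\eta\lbi$-proof is the image under $\floor{\cdot}$ of a uniquely determined polarised occurrence in $\Gamma$ or $N$, read off from the fixed polarisation of the conclusion. The induction hypothesis then supplies $\fbi+\cut$-proofs of the polarised premises, and in each case the task is to reassemble them using the $\fbi$ rule matching that connective, inserting shift rules ($\rrn\upshift$, $\lrn\upshift$, $\rrn\downshift$, $\lrn\downshift$) wherever depolarisation has hidden a shift so that all intermediate sequents stay well-formed.

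First I would dispatch the cases whose matching $\fbi$ rule is \emph{asynchronous}, namely the unfocused right rules (such as $\rrn{\land^-}$, $\rrn\to$, $\rrn\wand$, $\rrn{\top^-}$), the unfocused left rules (such as $\lrn\ast$, $\lrn\lor$, $\lrn{\land^+}$, $\lrn\i$, $\lrn{\top^+}$, $\lrn\bot$), and contraction, matched by the neutral rule $\rn C$. These mirror the corresponding $\eta\lbi$ rule almost verbatim, with weakening and the generalised axiom $\phi \seq \phi$ available through Lemma~\ref{lem:weakelim} and Lemma~\ref{lem:formulaAx}.

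The crux is the \emph{synchronous} connectives (the positive right rules $\rrn\ast$, $\rrn{\land^+}$, $\rrn\lor$, $\rrn\i$, $\rrn{\top^+}$ and the negative left rules $\lrn\wand$, $\lrn\to$, $\lrn{\land^-}$), where $\fbi$ can only decompose the principal formula while it is \emph{in focus}, yet the $\eta\lbi$-inference acts on it inside an ordinary unfocused sequent. Here $\cut$ performs the focusing on demand. For each such rule I would first build a purely focused \emph{skeleton}: the matching focused $\fbi$ rule applied to the principal polarised formula, with every branch closed by a $\rn P$ or $\rn N$ axiom in which the relevant premise-context has been replaced by a shifted placeholder ($\downshift\upshift P$ on the left, or $\foc{N}$ on the left against a right $\upshift\downshift N$). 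I would then $\cut$ each induction-hypothesis proof of a premise against its placeholder, so that the shifted placeholder is traded back for the genuine subderivation; this is precisely the role of $\rn P$ and $\rn N$ in letting a formula pass from one side of the sequent to the other.

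I expect the main obstacle to be the polarity and shift bookkeeping that surrounds these cuts: one must choose the shift $\vec\phi$ inserted by $\cut$ so that \emph{both} cut premises are well-formed polarised sequents, verify that the focused skeleton can always be closed (in particular that the left context neutralises to a neutral nest $\vec\Gamma$ before any right-focusing step), and match the weakening built into $\lrn\wand$ and $\rrn\ast$ against the corresponding $\eta\lbi$ side conditions via the admissibility of weakening. Because we work in $\fbi+\cut$ rather than $\fbi$, focus need never be preserved across the whole simulation, so correctness reduces to these well-formedness checks together with confirming that each assembly step is a legal $\fbi$ or $\cut$ inference.
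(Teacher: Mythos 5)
Your proposal is correct and takes essentially the same approach as the paper: a rule-by-rule simulation of $\nestify\lbi$ in $\fbi+\cut$, where unfocused rules transfer immediately and each synchronous rule is handled by cutting the induction-hypothesis proofs against a skeleton of focused $\fbi$ rules closed by the $\rn P$/$\rn N$ axioms, with shifts inserted and weakening admissibility invoked for the built-in weakening of $\rrn\ast$/$\lrn\wand$. The only cosmetic difference is the cut arrangement: the paper uses a single cut on the shift-prenexed principal formula (e.g.\ $\downshift\upshift\phi^+ * \downshift\upshift\psi^+$ for $\rrn\ast$), combining the premise proofs in the left cut branch, whereas you cut once per premise against each placeholder $\downshift\upshift P$ in the skeleton's context --- the same idea assembled in a slightly different order.
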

\begin{proof}
	We show that every rule in $\nestify\lbi$ is derivable in $\fbi+\cut$, consequently every proof in $\nestify\lbi$ may be simulated; hence, every provable sequent has a focused proof. For unfocused rules $\rrn\to, \rrn\wand, \rrn{\land^-}, \lrn{\land^+}, \lrn\lor, \lrn\ast, \lrn\bot, \rrn{\top^-}, \lrn{\top^+}, \lrn\i$, this is immediate; as well as for $\rn{Ax}$ and $\cont$. Below we give an example on how to simulate a focused rule.
	
	Where it does not matter (e.g. in the case of inactive nests), we do not distinguish the polarised and unpolarised versions; each of the simulations can be closed thanks to the presence of the $\rn P$ and $\rn N$ rules in $\fbi$.\\
	
	\noindent\scalebox{.9}{	
	$
	\infer[\rrn\ast]{\{\{\Gamma,\Delta\}_\times,\Delta'\}_+ \seq \phi * \psi}{\Gamma \seq \phi & \Delta \seq \psi}
	$
	}
	in $\nestify\lbi$ is simulated in $\fbi+\cut$ by
	
	\scalebox{.9}{
	$
	\infer[\cut]{\{\{ \Gamma, \Delta \}_\times,\Delta'\}_+ \seq \upshift(\phi^+*\psi^+)}{
		\infer[\rrn\ast]{\{ \Gamma,\Delta \}_\times \seq \foc{\downshift\upshift\phi^+*\downshift\upshift\psi^+} }{
			\infer[\rrn\downshift]{\Gamma \seq \foc{\downshift\upshift\phi^+}}{
				\Gamma \seq \upshift\phi^+
				}
			&
			\infer[\rrn\downshift]{\Delta \seq \foc{\downshift\upshift\psi^+}}{\Delta \seq \upshift\psi^+}
			}
		&
		\infer[\lrn\ast]{\{ \downshift\upshift\phi^+*\downshift\upshift\psi^+,\Delta'\}_+ \seq \upshift(\phi^+*\psi^+)}{
			\infer[\rrn\upshift]{\{\{ \downshift\upshift\phi^+,\downshift\upshift\psi^+ \}_\times,\Delta'\}_+ \seq \upshift(\phi^+*\psi^+)}{
				\infer[\rrn\ast]{ \{\{ \downshift\upshift\phi^+,\downshift\upshift\psi^+ \}_\times,\Delta'\}_+ \seq \foc{\phi^+*\psi^+}}{
					\infer[\rn P]{\downshift\upshift\phi^+ \seq \foc{\phi^+}}{}
					&
					\infer[\rn P]{\downshift\upshift\psi^+ \seq \foc{\psi^+}}{}
					}
				}
			}
		}
	$
	}
\qed \end{proof}
\begin{theorem}[Completeness of $\fbi$]
	For any unfocused $\Gamma \seq N$, if $\vdash_{\nestify\lbi} \floor{\Gamma \seq N}$ then $\vdash_{\fbi} \Gamma \seq N$.
\end{theorem}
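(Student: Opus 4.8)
The plan is to obtain the statement as an immediate corollary of the two preceding results, namely the Completeness of $\fbi+\cut$ (Lemma~\ref{lem:compFBIcut}) and the Cut-elimination theorem for $\fbi$ (Theorem~\ref{lem:fbicutadmi}). First I would check that the hypotheses line up: the sequent $\Gamma \seq N$ is \emph{unfocused} with $\Gamma$ a positive nest and $N$ a negative formula, which is exactly the shape required as input by Lemma~\ref{lem:compFBIcut} and exactly the shape for which Theorem~\ref{lem:fbicutadmi} guarantees the removal of cuts. Since both results quantify over precisely this class of sequents, no reformulation or case analysis is needed to connect them; the argument is a simple two-step chain.

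Concretely, assume $\vdash_{\nestify\lbi} \floor{\Gamma \seq N}$. Applying Lemma~\ref{lem:compFBIcut} to this assumption yields a proof in the system with cut, that is, $\vdash_{\fbi+\cut} \Gamma \seq N$. Then, invoking the $(\Leftarrow)$ direction of Theorem~\ref{lem:fbicutadmi} on this proof produces a cut-free, and hence genuinely focused, proof witnessing $\vdash_{\fbi} \Gamma \seq N$, which is the desired conclusion.

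I do not expect any real obstacle here, as all the technical weight has already been discharged: the simulation of each $\nestify\lbi$ rule inside $\fbi+\cut$ was carried out in Lemma~\ref{lem:compFBIcut}, and the elimination of both good and bad cuts (via the well-founded $\prec$-ordering) was established in Theorem~\ref{lem:fbicutadmi}. The only point worth stating explicitly is that completeness delivers a sequent whose form matches the side conditions of cut-elimination, so the composition is well-typed. Finally, I would remark that combining this theorem with the Soundness of $\fbi$ shows that $\fbi$-provability of $\Gamma \seq N$ coincides with $\nestify\lbi$-provability of $\floor{\Gamma \seq N}$, from which the validity of the focusing principle follows as the promised corollary.
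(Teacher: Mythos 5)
Your proposal is correct and follows exactly the paper's own argument: first apply Lemma~\ref{lem:compFBIcut} to obtain a proof in $\fbi+\cut$, then apply Theorem~\ref{lem:fbicutadmi} to eliminate the cuts and land in $\fbi$. The additional observations you make (that the sequent shapes match the side conditions of both results, and that combining with soundness yields the focusing principle) are accurate and consistent with the paper's surrounding discussion.
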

\begin{proof}
	It follows from Lemma \ref{lem:compFBIcut} that there is a proof of $\Gamma \seq N$ in $\fbi+\cut$, and then it follows from Lemma \ref{lem:fbicutadmi} that there is a proof of $\Gamma \seq N$ in $\fbi$.
\qed \end{proof}
Given an arbitrary sequent the above theorem guarantees the existence of a focused proof, thus the focusing principle holds for $\eta \lbi$ and therefore for $\lbi$.

\section{Conclusion}\label{sec:conclusion}
By proving the completeness of a focused sequent calculus for the logic of Bunched Implications, we have demonstrated that it satisfies the focusing principle; that is, any polarisation of a BI-provable sequent can be proved following a focused search procedure. This required a careful analysis of how to restrict the usage of structural rules. In particular, we had to fully develop the congruence-invariant representation of bunches as nested multisets (originally proposed in \cite{Donnelly05}) to treat the exchange rule within bunched structures.

Proof-theoretically the completeness of the focused systems suggests a syntactic orderliness of $\lbi$, though the $\rn P$ and $\rn N$ rules leave something to be desired. Computationally, these axioms are unproblematic as during search it makes sense to terminate a branch as soon as possible; however, unless they may be eliminated it means that the focusing principle holds in BI only up to a point. 
In related works (c.f. \cite{Chaudhuri16a}) the analogous problem is overcome by first considering a \emph{weak} focused system; that is, one where the structural rules are not controlled and unfocused rules may be performed inside focused phases if desired. Completeness of (strong) focusing is achieved by appealing to a \emph{synthetic} system. It seems reasonable to suppose the same can be done for BI, resulting in a more proof-theoretically satisfactory focused calculus, exploring this possibility is a natural extension of the work on $\fbi$.

The methodology employed for proving the focusing principle can be interpreted as soundness and completeness of an operational semantics for goal-directed search. The robustness of this technique is demonstrated by its efficacy in modal \cite{Chaudhuri16a,Chaudhuri16b} and substructural logics \cite{Lincoln92}, including now bunched ones. Although BI may be the most employed bunched logic, there are a number of others, such as the family of relevant logics \cite{Read88}, and the family of bunched logics \cite{Docherty19}, for which the focusing principle should be studied. However, without the presence of a $\cut$-free sequent calculus goal-directed search becomes unclear, and currently such calculi do not exist for the two main variants of BI: Boolean BI \cite{Pym02} and Classical BI \cite{Brotherston10}. On the other hand, large families of bunched and substructural logics have been given hypersequent calculi \cite{Ciabattoni12,Ciabattoni17}. Effective proof-search procedures have been established for the hypersequent calculi in the substructural case \cite{Ramanayake20}, but not the bunched one, and focused proof-search for neither. There is a technical challenge in focusing these systems as one must not only decide which formula to reduce, but also which sequent.  

In the future it will be especially interesting to see how focused search, when combined with the expressiveness of BI, increases its modelling capabilities. Indeed, the dynamics of proof-search can be used to represent models of computation within (propositional) logics; for example, the undecidability of Linear Logic involves simulating two-counter machines \cite{Lincoln92}. One particularly interesting direction is to see how focused proof-search in BI may prove valuable within the context of Separation Logic. Focused systems in particular have been used to emulate proofs for other logics \cite{Marin16}; and to give structural operational semantics for systems used in industry, such as algorithms for solving constraint satisfaction problems \cite{Farooque13}. A more immediate possibility though is the formulation of a theorem prover; we leave providing specific implementation or benchmarks to future research.

\bibliographystyle{splncs04}
\bibliography{bib}


\vfill

{\small\medskip\noindent{\bf Open Access} This chapter is licensed under the terms of the Creative Commons\break Attribution 4.0 International License (\url{http://creativecommons.org/licenses/by/4.0/}), which permits use, sharing, adaptation, distribution and reproduction in any medium or format, as long as you give appropriate credit to the original author(s) and the source, provide a link to the Creative Commons license and indicate if changes were made.}

{\small \spaceskip .28em plus .1em minus .1em The images or other third party material in this chapter are included in the chapter's Creative Commons license, unless indicated otherwise in a credit line to the material.~If material is not included in the chapter's Creative Commons license and your intended\break use is not permitted by statutory regulation or exceeds the permitted use, you will need to obtain permission directly from the copyright holder.}

\medskip\noindent\includegraphics{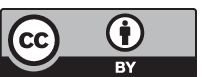}

%
%
%
%
%
\end{document}